\title{COMPLEXITY REDUCTION IN MANY PARTICLE SYSTEMS WITH RANDOM INITIAL DATA
\thanks{The 
        work of Leonid Berlyand and Mykhailo Potomkin was supported by DOE grant DE-FG-0208ER25862. The work of Pierre-Emmanuel Jabin was partially supported by NSF grant DMS-1312142. LB and MP wish to thank V. Rybalko for his comments and suggestions which helped to improve the manuscript.}} 
\author{Leonid Berlyand\thanks{Department of Mathematics, The Pennsylvania State University, University Park, Pennsylvania. 
(\email{lvb2@psu.edu})}
\and 
Pierre-Emmanuel Jabin\thanks{Department of Mathematics, University of Maryland, College Park, MD 20742
USA, (\email{pjabin@umd.edu})}
\and Mykhailo Potomkin\thanks{Department of Mathematics, The Pennsylvania State University, University Park, Pennsylvania, (\email{potomkin@math.psu.edu})}
}
\begin{document}
\maketitle
\newcommand{\slugmaster}{%
\slugger{}{xxxx}{xx}{x}{x--x}}

\begin{abstract}
We consider the motion of interacting particles governed by a coupled system of ODEs with random initial conditions. Direct computations for such systems are prohibitively expensive due to a very large number of particles and randomness requiring many realizations in their locations in the presence of strong interactions. While there are several approaches that address the above difficulties, none addresses all three simultaneously. Our goal is to develop such a computational approach in order to capture the experimentally observed emergence of correlations in the collective state (patterns due to strong interactions). Our approach is based on the truncation of the BBGKY hierarchy that allows one to go beyond the classical Mean Field limit and capture correlations while drastically reducing the computational complexity. 
Finally, we provide an example showing a numerical solution of this nonlinear and non-local
system.
\end{abstract}

\begin{keywords}
Mean Field, correlation, systems of a large number of particles.
\end{keywords}

\begin{AMS}
35L65, 35L71,  82-08
\end{AMS}

\pagestyle{myheadings}
\thispagestyle{plain}
\markboth{L. Berlyand, P.-E. Jabin, M. Potomkin }{Many Particle Systems with Randomness}

\section{Motivation and Settings}

Systems of interacting particles described by a coupled system of a large
number of ODEs with random initial conditions appear in many problems of physics, cosmology,
chemistry, biology, social science and economics:
\begin{equation}\label{original}
\dot{X}_i=S(X_i)+\frac{\alpha}{N}\sum\limits_{j=1}^{N}K(X_j-X_i),\;\;i=1,...,N.
\end{equation}
Here $X_i(t)$ denotes the position of $i$th particle and $X_i$ belongs to $D$, where $D$ throughout this paper can stand for ${\mathbb R}^d$, a $d$-dimensional torus $\Pi^d$, or a compact domain in ${\mathbb R}^{d}$ in which case boundary conditions must be added. The scalar function $K$ describes the inter-particle interactions, and $S(X_i)$ models an internal force of each particle, such as self-propulsion. 

System \eqref{original} is an Individual Based Model, i.e., it has an ODE for each particle coupled with others. In various applications the role of an individual can be played by atoms, bacteria in suspensions ({microswimmers}), animals in flocks, social agents etc. The system \eqref{original} has two key parameters: $\alpha$, the strength of interactions, and $N$, the number of particles.  
The parameter $\alpha$ is determined by both geometry such as interparticle distance and the {mass} of a particle (note that a model particle is just a point) as well as physics. 
In this paper we restrict ourselves to  the case when the right hand side of \eqref{original} is linear in $\alpha$. The magnitude of $\alpha$ plays an important role in analysis of the system \eqref{original}: a small $\alpha$ corresponds to almost decoupled {interactions}; large $\alpha$ corresponds to strong interactions which is our main focus; $\alpha \sim 1$ corresponds to the classical Mean Field regime.  

Our work is motivated by experiments in bacterial suspensions \cite{WuLib00,DomCisCha04,SokAra07,SokAra09,SokApoGrz10,LepKesGan11,SokAra12}. These experiments \cite{SokAra07,SokAra09,SokAra12} show the emergence of a coarse {\it collective scale} 
 when the concentration of bacteria
exceeds a critical value. Roughly speaking, the collective scale is the correlation length of the velocity field
 in a bacterial suspension. A striking universality property has been observed experimentally and numerically in \cite{SokAra07,SokAra09,RyaSok13} the collective scale does not change when swimming speed and concentration increase, that is more energy is injected into the system (for other studies of collective state in bacterial suspensions see also \cite{SanShe13} and references therein).

The motion of bacteria can be modeled by a system of the form \eqref{original} where the position and orientation of the $i$th bacteria are described by the vector $X_i(t)$. In this case  the parameter $\alpha$ equals  $\left({\ell}/{R}\right)^2 NV_0$, where $V_0$  is \textcolor{black}{the swimming speed of a single bacterium}, $R$ is the mean distance between two bacteria, and $\ell$ is the characteristic size of a bacterium. 
The collective behavior observed in experiments \cite{SokAra07,SokAra09,SokAra12} has also been qualitatively reproduced by direct numerical simulations in \cite{RyaSok13} for systems of $10^5$ bacteria, which validates the model of the type \eqref{original}. 
However, the computational cost of direct simulations of the ODE system for a realistic number of bacteria is prohibitively high for the following reasons: 
\begin{list}{}{}
\item{\hspace{-0.3in}$(i)$} \hspace{4 pt} the number of bacteria $N$ is very large ($10^{10}$ per $\text{cm}^3$);
\item{\hspace{-0.33in}$(ii)$} \hspace{6pt}to draw a reliable conclusion one needs to consider many realizations, which mathematically translates into random initial data;
\item{\hspace{-0.35in}$(iii)$} \hspace{3 pt}the main interest is in collective state corresponding to large $\alpha$ which leads to small time steps. 
\end{list}
The combination of the factors $(i)$-$(iii)$ makes the computational cost too high even for the most powerful particle methods such as Fast Multipole Method \cite{GreRok87,GreRok88,GreRok89}. 

The goal of this paper is to propose a computational approach that allows one to describe numerically the collective state of this system with properties $(i)$-$(iii)$. More specifically, the collective state is described by the correlation length and two-point correlation function. The objective of our study is the efficient computation of these two quantities. 

The main idea is to replace the ODE system \eqref{original} by a PDE such that the computational cost of its solution does not grow as $N$ goes to infinity. This idea had been used in the classical Mean Field 
approach which corresponds to $\alpha$ of the order 1 and is not valid for strong interactions, e.g., $\alpha \sim N^{\gamma}$ for $0<\gamma<1$.

This paper focuses on a PDE approach that extends beyond the Mean Field, so that it can capture correlations in a computationally efficient way such that the computational complexity {increases only slowly} with $N$. The main idea is to consider the BBGKY hierarchy of PDEs which consists of $N$ equations (therefore it is even harder to solve than \eqref{original}) and obtain a closed system for 1-particle and 2-particle distributions by a clever truncation of the hierarchy. Then the large parameter $N$ is present only in coefficients in a more innocuous way, and they can be handled efficiently with high order methods. This approach computes distribution functions and therefore it avoids computing individual realizations. 
Thus, it allows us to overcome the computational difficulties $(i)$ and $(ii)$. The contribution to the computational complexity from difficulty $(iii)$ is still present but much less of a problem than $(i)$ and $(ii)$, because $\alpha \sim N^{\gamma}$ and $\gamma<1$.

Note that a specific feature of our method is that it is efficient for random initial conditions of system \eqref{original} in contrast to deterministic. Indeed, a seemingly simpler case of deterministic initial data leads to a solution of the truncated BBGKY hierarchy with singular initial conditions ($\delta$-functions), which is why the numerical cost of solving such a deterministic problem is very high.  In contrast, random initial data in ODE \eqref{original} lead to {\it smooth} initial conditions in the truncated BBGKY hierarchy that is much easier to handle numerically. 

The truncation presented in this paper can be applied for various ODE systems of type \eqref{original}, {in dimension $1$ or more}. Note that different truncations of the Boltzmann hierarchy have been made before for some specific situations, which usually rely on some perturbative arguments. For example, we refer to papers \cite{Mar87,HonNieOtt05,HonNieOtt06} devoted to Ostwald ripening where a truncation was motivated by expansions in (small) concentration  of particles.   

The paper is organized as follows. We recall the Mean Field approach and discuss its limitations in Section \ref{meanfield}. The truncation of BBGKY hierarchy is described in Section 3. Numerical simulations performed to check that the truncated PDE system is reliable are decribed in Section \ref{numerics}.

\section{Random initial conditions, correlations, the Mean Field approach}\label{meanfield}

For physical reasons, the initial conditions for the system \eqref{original} are typically random as explained below. In the classical Mean Field theory, this leads to a drastic reduction in the computational complexity: it is possible to approximate the original solution by the solution of a PDE which does not depend on $N$. We describe here the two classical approaches  to derive the Mean Field limit. The first one is based on the so-called empirical measure. The second one is a statistical approach which is better suited  for our purpose. 

The Mean Field limit is valid as long as the correlations between particles are negligible. This phenomenon is known as {\it propagation of chaos}. However, our work is motivated by experimental studies of the collective state, whose key feature is the rise of {\it correlations} corresponding to the emergence of a {\it collective scale}. In this case, as we will explain below, the Mean Field approach fails.   

Our approach in this paper is mostly formal. Nevertheless, we point out that the Mean Field theory described below can be made rigorous if some smoothness is assumed on $K$. More precisely, 
\begin{equation}
\nabla K\in L^\infty(D),\qquad K(x)\rightarrow 0\quad\mbox{as}\quad |x|\rightarrow \infty. \label{condK} 
\end{equation}
On the other hand, we believe that the numerical implementation of this approach will work well even for singular kernels (see remark 2.1).

\medskip 

\subsection{Preliminaries}
\noindent{\it How to choose initial conditions}: {\it Randomness and marginals.} 
By assumption \eqref{condK} and the standard Cauchy-Lipschitz theory, there exists a unique solution to \eqref{original} once each initial position $X_i(0)$ is chosen. However for most practical purposes, determining those initial positions can be a very delicate problem  as the full information is not accessible from an experimental point of view. For $N\sim 10^{10}$, it is indeed completely unrealistic to measure the position of each particle with enough precision.

Instead, {what is accessible is some} statistical information about the positions of the particles. Hence one  usually assumes that the initial position of each particle is randomly distributed. That means that the information on the initial distribution of the particles is now encoded in the $N$-particle distribution function at time $0$, $f_N(t=0,x_1,...,x_N)$. Given a subdomain $\mathcal{Q}\subset D^N$, the probability of finding the initial positions $(X_1^0,...,X_N^0)\in\mathcal{Q}$ is given by
\[
\int_{\mathcal{Q}} f_N(t=0,x_1,...,x_N)\,dx_1...dx_N.
\]  
System \eqref{original} is deterministic but if the initial conditions are random, then the randomness will be propagated  defining the $N$-particle distribution for $t>0$. Technically, $f_N(t,.)$ is the push forward of $f_N(t=0,.)$ by the flow generated by \eqref{original}. 

From $f_N$ one may define the $k$-th marginal
\[
f_k(t,x_1,...,x_k)=\int_{D^{N-k}} f_N(t,x_1,...,x_k,x_{k+1},...,x_N)\,dx_{k+1}...dx_N.
\]
Some of marginals have a natural physical interpretation. For instance, $f_1$ is the $1$-particle distribution function and for $\mathcal{O}\subset D$ the average number of particles in the subset $\mathcal{O}$ is
\[
\int_{\mathcal{O}} f_1(t,x)\,dx.
\]

It is still not experimentally possible to measure $f_N$ but it is possible to measure some marginals, especially $f_1$ and the $2$-particle distribution function $f_2$. 

In the simplest case, one assumes that the particles are initially independently and identically 
distributed, that is
\begin{equation}
f_N(t=0,x_1,...,x_N)=\Pi_{i=1}^N f^0(x_i).\label{initiallaw}
\end{equation}
This independence is strongly connected to the usual Mean Field limit approach as explained in subsection 2.2 (see \eqref{indep}).

\medskip

\noindent{\it Definition of correlations.} 
Our main goal is to understand how correlations develop in system \eqref{original} with random initial conditions. Those are connected to the second marginal $f_2$.

We define correlation  of particles' positions by 
\begin{equation}\label{def_corr}
c= \frac{{\mathbb E}[X_1\cdot X_2]-\left({\mathbb E}[X]\right)^2 }{{\mathbb E}[X^2]-\left({\mathbb E}[X]\right)^2}=\frac{\int x_1\cdot x_2 f_2(x_1,x_2)dx_1dx_2-\left(\int xf_1(x) dx\right)^2}{\int x^2 f_1(x)dx-\left(\int xf_1(x) dx\right)^2}.
\end{equation}

Observe that the correlation $c$ can only vanish if
\[
f_2(x,y)=f_1(x)\,f_1(y),
\]
that is if the particles positions are independent. Therefore, roughly speaking, the correlations in the system are determined by how far $f_2(x,y)$ is from $f_1(x)\,f_1(y)$.

\medskip

\subsection{The Mean Field approach} $\;$\\
\noindent{\it Empirical measure.} Assume that the $X_i(t)$ are solutions to \eqref{original}, and define the empirical measure
\begin{equation}\label{empirical}
\mu_N(t,x)=\frac{1}{N}\sum\limits_{i=1}^{N} \delta (x-X_i(t)). 
\end{equation}
Note that if the particles are undistinguishable then there is just as much information in the empirical measure as in the position vector $(X_1,...,X_N)$. Otherwise, it only tells that there is a particle at $x$, but it is not clear which one.  

If $K$ is continuous, then $\mu_N$ solves the Vlasov equation in the sense of distributions
\begin{equation}
\partial_t f(t,x)+\nabla_{x}\cdot \left(S(x)f(t,x)\right)+\alpha\nabla_{x}\cdot\left(\int K(y-x)f(t,y)dy f(t,x)\right)=0.
\label{vlasov}
\end{equation}

Consider a sequence of initial positions $\mathcal{X}_N=\left\{X_{i}(0):i=1,...,N\right\}$ such that the corresponding empirical measure
$\mu_N(0)$ converges to some $f^0\in { \Pi}(D)$ as $N$ goes to infinity. Here ${\Pi}(D)$ is the space of probability measures $\mu$ on $D$ such that $\mu(D)=1$. Then it is natural to expect that $\mu_N$ will also converge to the corresponding solution $f$ to \eqref{vlasov} with initial data $f^0$. 
Assuming that $f^0$ is smooth then it is possible to compute numerically $f$ and hence to get a good approximation to $\mu_N$. This is the classical Mean Field limit theory which can be made quantitative.

Those quantitative estimates require some weak distances on the space of measures. These are classically the so-called Monge-Kantorovich-Wasserstein (MKW) distances. For our purpose it is enough to understand that they {correspond to} some appropriate distance between probability measures. For the sake of completeness we define these distances in Appendix A.

Now we give the main stability estimate behind the Mean Field limit.
From \cite{BraHep77}, \cite{Neun79}, and \cite{Spoh91}, it is possible to prove
that if $f$ and $g$ are two measure-valued solutions to \eqref{vlasov}, then
\begin{equation}
W_p(f(t,.),g(t,.))\leq e^{t\,\alpha\,\|\nabla K\|_{L^\infty}}\,W_p(f(0,.),g(0,.)),
\label{gronwall}
\end{equation}
where $W_p(\cdot,\cdot)$ is a $p$-Wasserstein or MKW distance between two measures. The inequality \eqref{gronwall} is a Gronwall-type inequality. Note also that the inequality \eqref{gronwall} applies for any initial conditions $f(0,.)$ and $g(0,.)$ which are not necessarily random.

\medskip

\noindent{\it The Mean Field limit.} 
In our context, the initial conditions are random as it was explained before. In particular, the empirical measure at time $t=0$ is itself random.

If the initial law is chosen according to \eqref{initiallaw}, then a large deviation for the law of large numbers applies  (${\mathbb E}\mu_N=f^0$) and ensures that, in fact, the initial measure $\mu_N(t=0)$ is very close to $f^0$. More precisely, it is proved for example in Boissard \cite[Appendix A, Proposition 1.2]{BoissardPhD},  that if $f^0$ is a nonnegative measure with compact support of diameter $R$, then for some constant $C$ and positive coefficients $\gamma_1$ and $\gamma_2$, depending only on the dimension of $D$ and $R$
\begin{equation} \label{W1largedev}
{\mathbb P} \left( W_1(\mu_N(t=0),f^0)  \geq   \frac {C\, R} {N^{\gamma_1}}  \right) \leq
e^{-C N^{\gamma_2}}.
\end{equation}
This says that with {a probability exponentially close} to $1$, $\mu_N(t=0)$ and $f^0$ are polynomially close in $N$. Denote by $f(x,t)$ the solution to \eqref{vlasov} with $f^0(x)$ as an initial data. By combining the deterministic stability \eqref{gronwall} with a law of large numbers in the form {of} \eqref{W1largedev} we obtain that with  probability larger than $(1-e^{-C N^{\gamma_2}})$
\begin{equation}\label{stab_random}
W_1(\mu_N(t,.),\;f(t,.))\leq \frac {C\, R} {N^{\gamma_1}}\,e^{t\,\alpha\,\|\nabla K\|_{L^\infty}}.
\end{equation}

\noindent{\it Why random initial conditions make computations much easier in the Mean Field framework?}
Looking for a  solution of the Vlasov equation \eqref{vlasov} in the form of a sum of $N$ Dirac masses like $\mu_N$ is just as complex and computationally costly as solving the original  ODE system \eqref{original}. 

However looking for smooth solutions to the Vlasov equation \eqref{vlasov} is comparatively much faster and obviously independent of $N$ (provided the solution of \eqref{vlasov} is independent of $N$).  Since  the initial distribution $f^0$ is usually assumed to be smooth,  the corresponding solution $f$ is  smooth as well.  Computing   $f$ numerically is thus far easier than solving  \eqref{original}, because computational cost does not depend on $N$.

The key to the reduction in the computational complexity in this Mean Field approach is that one does not solve the original ODE system \eqref{original} but instead one solves the Vlasov PDE for $f$. The previous inequality \eqref{stab_random} implies that this  $f$ will be a good approximation of the original $\mu_N$ up to a time $t$ of order
\begin{equation}\label{time_interval}
\frac{\log N}{\alpha\,\|\nabla K\|_{L^\infty}}.
\end{equation}
Note that in certain circumstances, this time can be considerably extended to become polynomial in $N$. This usually requires a stable equilibrium to equation \eqref{vlasov}, see \cite{CagRou07} for instance.

\medskip

\subsection{Propagation of chaos} It is possible to interpret the Mean Field limit in terms of the propagation of chaos on the marginals. For this, we introduce the hierarchy of equations on the marginals.

First, it is well-known that $f_N$ solves the Liouville equation:
\begin{eqnarray}
&&\partial_t f_N +\sum\limits_{i=1}^{N}\partial_{x_i}\left(S(x_i)f_N\right)+\frac{\alpha}{N}\sum\limits_{i=1}^{N}\partial_{x_i}\left(\sum\limits_{j=1}^{N}K(x_j-x_i)f_N\right)=0.\label{eq_for_fN}
\end{eqnarray}
By integrating the equation for $f_N$, one obtains an equation satisfied by each marginal $f_k$
\begin{eqnarray}
\nonumber&&\partial_t f_k+\sum\limits_{i=1}^{k}\partial_{x_i}\left(S(x_i)f_k\right)+\frac{\alpha}{N}\sum_{i=1}^k \sum_{j=1}^k \partial_{x_i}\left(K(x_j-x_i)\,f_k\right)\\&&\hspace{60 pt}+\frac{\alpha (N-k)}{N}\sum_{i=1}^k
\partial_{x_i}\left( \int K(y-x_i)\,f_{k+1}(t,x_1,\ldots,x_k,y)\,dy\right)=0.\label{eq_for_fk}
\end{eqnarray}
For example, the PDE for $f_1$ is 
\begin{eqnarray}
&&\nonumber \partial_t f_1(t,x_1)+\partial_{x_1}\left(S(x_1)f_1(t,x_1)\right)+\frac{\alpha K(0)}{N}\partial_{x_1}f_1(t,x_1)\\&&\hspace{80 pt}+\alpha\frac{N-1}{N}\partial_{x_1}\left\{\int K(y-x_1)f_2(t,x_1,y)dy\right\}=0.\label{eq_for_f1}
\end{eqnarray}
By taking the formal limit $N\to \infty$ in the equation \eqref{eq_for_f1} and assuming the independence condition $f_2(t,x_1,x_2)=f_1(t,x_1)f_1(t,x_2)$, we get equation \eqref{vlasov}.

This leads us to the crucial concept of {\it propagation of chaos}.  Under some mild conditions on the smoothness of $K$, for  
initial positions that are close to being independent (that is \eqref{initiallaw} is assumed)  as $N\rightarrow \infty$ we have 
\begin{equation}\label{indep}
f_k(t,x_1,...,x_k)\to \Pi_{i=1}^{k}f(t,x_i),
\end{equation}
where $f(t,x)$ solves the Mean Field equation \eqref{vlasov}. 

Note that for a finite $N$, one cannot have equality in \eqref{indep} and, in particular, 
$\Pi_{i=1}^{N}f(t,x_i)$ cannot be a solution to \eqref{eq_for_fN}. Hence, for a finite but large $N$ and for initial conditions of the form $f_N|_{t=0}=\Pi_{i=1}^{N}f_0(x_i)$, the particles' positions are not independent  but their correlation  is very small, at least on the time interval when the Mean Field limit holds, $i.e.$, up to a time of order \eqref{time_interval}.

\medskip

\noindent{\it Beyond Mean Field: The BBGKY hierarchy and its truncation.} The Mean Field limit leads to a closed equation on $f_1$ but it only offers a rough estimate of $f_2$. In particular it cannot give any estimate on correlations since it relies on the premises that they are vanishing. In our context, however, this means that we cannot use the Mean Field framework to evaluate correlations as defined by \eqref{def_corr} which are small but non $0$ either, in line with the experimentally observed phenomenon that we wish to explain. 


In general the exact computation of those correlations would require one to exactly solve equation on $f_2$. As it was pointed out above, this equation would in turn require to solve the equation on $f_3$ and so on. 

Any exact solution would require solving the full equation \eqref{eq_for_fN} on $f_N$. Unfortunately, $f_N$ is a function of $N+1$ variables and the computational cost of the numerical solution of \eqref{eq_for_fN} is much too large to be even remotely realistic.

We would like instead to compute directly the marginals up to $f_k$ for some $k$, approximately if it is not possible to do it exactly.
This leads us to the key question of possible {\em truncations for the BBGKY hierarchy}. A truncation at level $k$ is an ansatz which expresses the terms involving $f_{k+1}$ in terms of $f_k$ and lower order marginals. Using this ansatz makes the first $k$ equations of the hierachy closed thus letting us solve them.

In that sense, the Mean Field limit can be seen as a particular case of truncation at order $k=1$. In this paper, we focus and propose a possible truncation at order $k=2$. We are able to show through numerical experiments that it is valid as long as correlations are not too large.


\noindent{{\bf Remark 2.1}} ({\it On singular kernels}) As mentioned before, the rigorous justification of the classical Mean Field theory requires some smoothness on the interaction kernel, $K(x)$ is Lipschitz. Many physical kernels are more singular, in particular in the context we are interested in, $i.e.,$ the context of bacteria interacting through a fluid. 

It is widely conjectured that the Mean Field theory can be extended to more singular kernels and some results are already available, see for example \cite{GooHouLow90}, \cite{Hau09}, \cite{Scho96} or \cite{HauJab07} in the phase space framework.

In this work, we are not concerned with rigorous justification of our results under proper assumptions on smoothness of $K$, however, just as in the Mean Field approach, we believe that the numerical implementation of our approach will work well for a wide class of kernels $K$ (including singular ones).  


\section{Truncation of the hierarchy} \label{truncation}
In this section we first discuss the possibility of a truncation ansatz $f_3=\mathcal{F}[f_1,f_2]$ such that the full BBGKY hierarchy becomes a system of two PDEs for marginals $f_1$ and $f_2$ only. A truncation ansatz of the form $f_3=\mathcal{F}[f_1,f_2]$ changes the equation for $f_2$:  
\begin{eqnarray}
\nonumber&&\partial_t f_2 +\frac{\alpha K(0)}{N}\partial_{x_1}f_2 +\frac{\alpha K(0)}{N}\partial_{x_2}f_2\\
\nonumber&&\hspace{50 pt}+ \frac{\alpha}{N}\partial_{x_1} (K(x_2-x_1)f_2)+\frac{\alpha}{N}\partial_{x_2}(K(x_1-x_2)f_2)\\
\nonumber&&\hspace{50 pt} +\alpha\frac{N-2}{N}\partial_{x_1}\left\{\int K(x_3-x_1)\mathcal{F}[f_1,f_2](t,x_1,x_2,x_3)dx_3\right\}\\
\label{eq_for_f2} &&\hspace{50 pt} +\alpha\frac{N-2}{N}\partial_{x_2} \left\{\int K(x_3-x_2)\mathcal{F}[f_1,f_2](t,x_1,x_2,x_3)dx_3\right\}=0.
\end{eqnarray}
For the sake of simplicity, in this section we consider the case with no self-interactions, that is $S(x)~\equiv~0$. We want the solution  to satisfy the following properties:  
\begin{enumerate}
\item $f_2(x_1,x_2)=f_2(x_2,x_1)$ ({\it particles are identical});
\item $\int\int f_2dx_1dx_2 \equiv \text{const}$, $f_1,f_2\geq 0$ provided that initial conditions for $f_1$ and $f_2$ are positive ({\it mass preserving and positivity});
\item $f_1=\int f_2$ ({\it consistency});
\item If $f_2=f_1\otimes f_1$, then $f_3(x_1,x_2,x_3)=\mathcal{F}[f_1,f_1\otimes f_1]=f_1(x_1)f_1(x_2)f_1(x_3)$.
\end{enumerate}
 By $f_2=f_1\otimes f_1$ we mean the equality $f_2(x,y)=f_1(x)f_1(y)$. 

Property 4 guarantees that the truncation ansatz $f_3=\mathcal{F}[f_1,f_2]$ is compatible with the Mean Field limit as $N \to \infty$. More precisely, letting $N\to \infty$ in the equation \eqref{eq_for_f2} for $f_2$ with the ansatz, the equation 
\begin{eqnarray}
\nonumber&&\partial_t f^{\infty}_2 +\partial_{x_1}\left\{\int K(x_3-x_1)\mathcal{F}[f^{\infty}_1,f^{\infty}_2](t,x_1,x_2,x_3)dx_3\right\}\\\nonumber &&\hspace{50 pt}+\partial_{x_2}\left\{\int K(x_3-x_2)\mathcal{F}[f_1^{\infty},f_2^{\infty}](t,x_1,x_2,x_3)dx_3\right\}=0
\end{eqnarray}
reduces to the Vlasov equation for the Mean Field limit, because the propagation of chaos holds: $f_2^{\infty}(t,x_1,x_2)=f_1^{\infty}(t,x_1)f_1^{\infty}(t,x_2)$. 

\medskip 

We reformulate  these properties as requirements on the function $\mathcal{F}$ and then prove that such an ansatz does not exist. Next, we present a truncation which is not based on a unique ansatz, yet the corresponding solution $f_2$ satisfies {the} four properties above. 
   
Consider a representation for $f_3$:
\begin{equation}\label{repr}
f_3(x_1,x_2,x_3)=\mathcal{F}[f_1,f_2](x_1,x_2,x_3),
\end{equation}
where $\mathcal{F}$ is a function  ({in general,} a nonlinear operator) of $f_1$ and $f_2$. We reformulate the key properties as requirements on $f_3$ calculated by \eqref{repr} for given $f_1$ and $f_2$. 

First, the symmetry of $f_2$ with respect to arguments $x_1$ and $x_2$ is {\it equivalent to}:    
\begin{eqnarray}
\nonumber&&\text{$f_2(x_1,x_2)=f_2(x_2,x_1)$ for all $x_1,x_2$}\\
&&\hspace{100 pt}\Rightarrow f_3(x_1,x_2,x_3)=f_3(x_2,x_1,x_3)\text{ for all }x_1,x_2,x_3.\label{req1}
\end{eqnarray}

Next, in order to preserve positivity of $f_1$ and $f_2$, we need to impose 
\begin{equation}\label{req2}
\text{for all }x_1,x_2: (f_2(x_1,x_2)=0\Rightarrow f_3(x_1,x_2,x_3)=0 \text{ for all }x_3)
\end{equation}
and 
\begin{equation}\label{req3}
(f_2(x_1,x_2)\geq 0\text{ for all }x_1,x_2)\Rightarrow (f_3(x_1,x_2,x_3)\geq 0 \text{ for all }x_1,x_2,x_3).
\end{equation}
The requirement \eqref{req2} implies that there exists a function $h(x_1,x_2,x_3)$ such that $f_3(x_1,x_2,x_3)=h(x_1,x_2,x_3)f_2(x_1,x_2)$. Indeed, if $f(x_1,x_2)\neq 0$, then \begin{equation}\nonumber h(x_1,x_2,x_3)=\frac{f_3(x_1,x_2,x_3)}{f_2(x_1,x_2)}\text{ for all $x_3$.}
\end{equation}
If $f_2(x_1,x_2)= 0$, then $h$ can be defined arbitrarily.
By the method of characteristics, this property guarantees positivity of the solutions to the truncated system \eqref{eq_for_f2} provided that the initial data is positive. 

Finally, in order to have the consistency property $f_1(x)=\int f_2(x,y)dy$ we impose 
\begin{equation}\label{req4}
f_2(x_1,x_2)=\int f_3(x_1,x_3,x_2)dx_3
\end{equation}
The equality \eqref{req4} is equivalent to the statement that if we integrate the equation for $k=2$ from the BBGKY hierarchy with respect to one of the spatial variables, say, $x_2$, we get the equation for $k=1$.

\begin{proposition} There is no such representation \eqref{repr} that all requirements \eqref{req1},\eqref{req2},\eqref{req3} and \eqref{req4} hold true. 
\end{proposition}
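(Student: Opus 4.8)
Suppose, for contradiction, that a representation $f_3=\mathcal F[f_1,f_2]$ satisfying \eqref{req1}--\eqref{req4} exists. The plan is to feed it one specific admissible pair $(f_1,f_2)$ and show that the four requirements force $f_3\equiv 0$, which then contradicts \eqref{req4}. The guiding idea is that \eqref{req2} makes $f_3$ vanish on the ``bad set'' of its first two slots, while \eqref{req4}, \eqref{req1} and \eqref{req3} together propagate that vanishing to the other two pairs of slots; hence $f_3$ can only live where all three pairwise projections of $(x_1,x_2,x_3)$ lie in the support of $f_2$.

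First I would record three vanishing statements valid for any symmetric $f_2\ge 0$. From \eqref{req2}: $f_3(x_1,x_2,x_3)=0$ whenever $f_2(x_1,x_2)=0$. From \eqref{req4}, which integrates $f_3$ over its \emph{middle} slot, one gets (after relabelling the free variables) $\int f_3(x_1,x_2,x_3)\,dx_2=f_2(x_1,x_3)$; since $f_3\ge 0$ by \eqref{req3}, this forces $f_3(x_1,x_2,x_3)=0$ for a.e.\ $x_2$ whenever $f_2(x_1,x_3)=0$. Finally, transposing the first two slots via \eqref{req1} (legitimate because $f_2$ is symmetric) turns the middle-slot marginal into a first-slot marginal, $\int f_3(x_1,x_2,x_3)\,dx_1=f_2(x_2,x_3)$, so that $f_3(x_1,x_2,x_3)=0$ for a.e.\ $x_1$ whenever $f_2(x_2,x_3)=0$. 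Thus $f_3$ is supported, up to a null set, on $\{(x_1,x_2,x_3): f_2(x_i,x_j)>0 \text{ for all three pairs } \{i,j\}\}$.

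Then I would choose $(f_1,f_2)$ so that this set is empty. Pick disjoint sets $A,B\subset D$ and nonnegative $\phi,\psi$ supported in $A,B$ respectively with $\int(\phi+\psi)>0$, and set $f_2(x,y)=\phi(x)\psi(y)+\psi(x)\phi(y)$, $f_1=\bigl(\int\psi\bigr)\phi+\bigl(\int\phi\bigr)\psi$; this is a perfectly legitimate symmetric nonnegative pair with $f_1=\int f_2(\cdot,y)\,dy$. Here $f_2(x,y)>0$ requires $x,y$ to lie in $A\cup B$ and in different members of $\{A,B\}$. Given any triple $(x_1,x_2,x_3)$: if some $x_i\notin A\cup B$ then every pair containing $x_i$ already has $f_2=0$; otherwise all three lie in $A\cup B$ and, since there are only two sets, two of them lie in the same one, so that pair has $f_2=0$. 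Either way at least one pair vanishes, so by the previous step $f_3=\mathcal F[f_1,f_2]=0$ a.e.\ on $D^3$. But then $\|f_2\|_{L^1}=\int\!\!\int\!\!\int f_3(x_1,x_3,x_2)\,dx_1dx_2dx_3=0$ by \eqref{req4}, whereas $\|f_2\|_{L^1}=2\,\|\phi\|_{L^1}\|\psi\|_{L^1}>0$ by construction --- a contradiction.

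The only step needing genuine care is the second paragraph: one must keep straight which slot of $f_3$ is being integrated in \eqref{req4} (the middle one), invoke \eqref{req1} correctly to reach the first slot, and accept ``almost everywhere'' vanishing coming from the nonnegativity in \eqref{req3} --- harmless here, since the final contradiction uses these statements only under an integral sign. Everything else, namely the choice of the bipartite test pair and the pigeonhole argument, is routine.
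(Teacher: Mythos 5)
Your proof is correct. It rests on the same underlying mechanism as the paper's: requirements \eqref{req2}, \eqref{req4}, \eqref{req1} and \eqref{req3} force $f_3(x_1,x_2,x_3)$ to vanish (a.e.) unless all three pairwise projections $(x_1,x_2)$, $(x_1,x_3)$, $(x_2,x_3)$ lie in the positivity set of $f_2$, so one kills $\mathcal F$ by choosing an $f_2$ whose positivity set contains no such ``triangle.'' Where you differ is in the choice of counterexample and in how the contradiction is extracted. The paper takes $f_2$ to be the normalized indicator of a concrete planar region (a diamond with a square hole), writes $f_3=h\,f_2$, derives the two vanishing statements for $h$, and reaches a contradiction by evaluating $f_1$ at the single point $1/8$; your bipartite construction $f_2(x,y)=\phi(x)\psi(y)+\psi(x)\phi(y)$ with $\phi,\psi$ supported on disjoint sets makes the no-triangle property a one-line pigeonhole argument, and your final contradiction (integrating \eqref{req4} over everything to get $\|f_2\|_{L^1}=0$) avoids any pointwise evaluation, which meshes more comfortably with the unavoidable ``almost everywhere'' qualifiers coming from \eqref{req3}. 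Your version is arguably cleaner and makes the structural obstruction more transparent; the paper's version is more concrete and visual. Two small points of hygiene: the hypothesis $\int(\phi+\psi)>0$ should read ``$\int\phi>0$ and $\int\psi>0$,'' since otherwise $f_2\equiv 0$ and the claimed $\|f_2\|_{L^1}=2\|\phi\|_{L^1}\|\psi\|_{L^1}>0$ fails; and the passage from the slice-wise a.e.\ vanishing statements to ``$f_3=0$ a.e.\ on $D^3$'' tacitly uses Fubini, hence measurability of $f_3$, which is harmless but worth stating.
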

\begin{proof}
The proof is by contradiction. The idea is to combine requirements \eqref{req2} and \eqref{req3}: 
\begin{equation}\label{comb34}
f_2(x_1,x_2)=\int h(x_1,x_3,x_2) f_2(x_1,x_3)dx_3
\end{equation} 
and to find such $f_2$ that the LHS of \eqref{comb34} is zero, but the RHS is not zero. 

Assume that \eqref{req1},\eqref{req2},\eqref{req3} and \eqref{req4} hold true. Take $$\Omega=\left\{(x_1,x_2): |x_1-\frac{1}{2}|+|x_2-\frac{1}{2}|<\frac{1}{2}\right\}\backslash\left\{|x_1-\frac{1}{2}|<\frac{1}{4},|x_2-\frac{1}{2}|<\frac{1}{4}\right\}$$
and $f_2(x_1,x_2)=\frac{1}{|\Omega|}\chi_{\Omega}(x_1,x_2)=4\chi_{\Omega}(x_1,x_2)$.  Here $\chi_{\Omega}$ is a characteristic function of domain $\Omega$. Note $f_1(x)>0$ for all $x\in (0,1)\backslash \left\{\frac{1}{4},\frac{3}{4}\right\}$ because of the equality $f_1(x)=\int f_2(x,y)dy$ which holds due to \eqref{req4}.

\begin{figure}
\centerline{\includegraphics[height=5 cm]{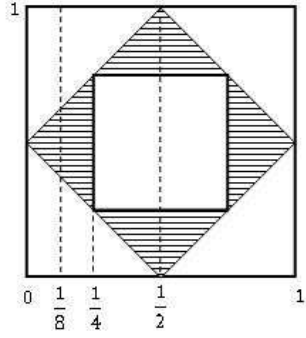}}
\caption{$\Omega$ is shaded domain}
\end{figure}
\noindent The property \eqref{req2} implies the existence of such a function $h(x_1,x_2,x_3)$ that $f_3(x_1,x_2,x_3)=h(x_1,x_2,x_3)f_2(x_1,x_2)$. Thus, from \eqref{req4} we obtain 
\begin{equation} \label{l5}
f_2(x_1,x_2)=\int h(x_1,x_3,x_2)f_2(x_1,x_3)dx_3.
\end{equation}  
Let $(x_1,x_2)\notin \Omega$, then \eqref{l5} implies that 
\begin{equation}
0=\int h(x_1,x_3,x_2)f_2(x_1,x_3)dx_3 =4\int_{x_3:(x_1,x_3)\in \Omega}h(x_1,x_3,x_2)dx_3.
\end{equation} 
Thus
\begin{equation}\label{omega1}\text{$h(x_1,x_3,x_2)=0$, if $(x_1,x_2)\notin \Omega$ and $(x_1,x_3)\in \Omega$.}
\end{equation}
By using the symmetry of $h$ with respect to  first two arguments  we get $h(x_1,x_3,x_2)=h(x_3,x_1,x_2)$ and 
\begin{equation}\label{omega2}
h(x_1,x_3,x_2)\equiv 0\text{ if }(x_2,x_3)\notin \Omega\text{ and }(x_1,x_3)\in \Omega. 
\end{equation}
Finally, calculate $f_1(1/8)$. On the one hand, $f_1(1/8)=\int f_2(1/8,y)dy>0$. On the other hand, 
\begin{equation}\label{intint}
f_1(x_2)=4\int\int_{(x_1,x_3)\in \mathcal{O}}h(x_1,x_3,x_2)\chi_{\Omega}(x_1,x_3)dx_3dx_1.
\end{equation}
where $\mathcal{O}=\left\{(x_1,x_3):h(x_1,x_2,x_3)\chi_{\Omega}(x_1,x_3)\neq 0\right\}$. The domain $\mathcal{O}$ depends on $x_2$. We claim that $\mathcal{O}$ is empty for $x_2=1/8$. Indeed, 
\begin{eqnarray*}
\mathcal{O}&=&\left\{h(x_1,x_3,1/8)\neq 0 \text{ and }\chi_{\Omega}(x_1,x_3)\neq 0\right\}=[\text{defenition of $\chi_{\Omega}$}]\\
&=&\left\{(x_1,x_3)\in \Omega, h(x_1,x_3,1/8)\neq 0\right\}\subset [\text{\eqref{omega1} and \eqref{omega2}}]\\
&\subset&  \left\{(x_1,x_3)\in \Omega, (x_1,1/8)\in \Omega,(x_3, 1/8)\in \Omega\right\}\\
&=&\left\{(x_1,x_3)\in \Omega, x_1\in (3/8,5/8), x_3\in (3/8,5/8)\right\}=\emptyset.
\end{eqnarray*}

Therefore, integral in \eqref{intint} is taken over empty set. Thus, $f_1(1/8)=0$ and we have reached a contradiction.
\end{proof}

\medskip

Instead of using a unique representation ansatz for $f_3$ we use two different,  but similar, representation ansatzes for $f_3$, $f_3=f_3^{(\text{I})}(x_1,x_2,x_3)$ and $f_3=f_3^{(\text{II})}(x_1,x_2,x_3)$, in two different places where $f_3$ appears in the equation $k=2$ such that the key properties are preserved. Namely, the equation $k=2$ is rewritten as follows 
\begin{eqnarray}
\nonumber&&\partial_t f_2 +\frac{\alpha K(0)}{N}\partial_{x_1}f_2 +\frac{\alpha K(0)}{N}\partial_{x_2}f_2\\
\nonumber&&\hspace{50 pt}+ \frac{\alpha}{N}\partial_{x_1} (K(x_2-x_1)f_2)+\frac{\alpha}{N}\partial_{x_2}(K(x_1-x_2)f_2)\\
\nonumber&&\hspace{50 pt} +\alpha\frac{N-2}{N}\partial_{x_1}\left\{\int K(x_3-x_1)f_3^{(\text{I})}(t,x_1,x_2,x_3)dx_3\right\}\\
\label{trunc} &&\hspace{50 pt} +\alpha\frac{N-2}{N}\partial_{x_2} \left\{\int K(x_3-x_2)f_3^{(\text{II})}(t,x_1,x_2,x_3)dx_3\right\}=0,
\end{eqnarray}
where
\begin{equation}\label{ansatz1}
f^{(\text{I})}_3(t,x_1,x_2,x_3)=\left\{\begin{array}{ll}\frac{f_2(t,x_1,x_2)f_2(t,x_1,x_3)}{\int f_2(t,x_1,y)dy},&\int f_2(t,x_1,y)dy>0,\\0,&\int f_2(t,x_1,y)dy=0,\end{array}\right.
\end{equation} and 
\begin{equation}\label{ansatz2}
f^{(\text{II})}_3(t,x_1,x_2,x_3)=\left\{\begin{array}{ll}\frac{f_2(t,x_1,x_2)f_2(t,x_3,x_2)}{\int f_2(t,y,x_2)dy},& \int f_2(t,y,x_2)dy>0,\\0,&\int f_2(t,y,x_2)dy=0.\end{array}\right.
\end{equation}
The four key properties listed below \eqref{eq_for_f2} {are preserved after such truncation}:
\begin{list}{}{}
\item {1.} Symmetry of $f_2(t,x_1,x_2)$ with respect to $x_1$ and $x_2$ (provided that $f_2(0,x_1,x_2)$ is symmetric) follows from symmetry of the equation with respect to $x_1$~and~$x_2$. 
\item {2.} Conservation of mass and positivity follow from the fact that equation \eqref{trunc} can be rewritten as a standard conservation law (see \eqref{trunc_as_cl} below).   
\item {3.}  By integrating \eqref{trunc} over, for example, $x_2$, one obtains an equation for $\int f_2(t,x_1,x_2)dx_2$ which coincides with the equation for $f_1$. By assuming uniqueness we get the consistency property: $f_1(t,x_1)=\int f_2(t,x_1,x_2) dx_2$.
\item {4.} If $f_2(x_1,x_2)=f_1(x_1)f_1(x_2)$, then $f^{(\text{I})}_3=f^{(\text{II})}_3=f_1(x_1)f_1(x_2)f_1(x_3)$. Note that in this case equation \eqref{trunc} {in the limit $N\rightarrow \infty$} is reduced to the Mean Field equation \eqref{vlasov}. 

\end{list}

We conclude this section by giving a physical interpretation of the introduced ansatz. To this end, we will rewrite \eqref{trunc} in a more convenient form.

Substitute \eqref{ansatz1} into the first integral term  in \eqref{trunc} 
$$
\int K(x_3-x_1)f^{\text{(I)}}(t,x_1,x_2,x_3)dx_3.
$$ Then this term is of the form 
\begin{equation}\label{other_for_of_integr_term}
F(t,x_1)f_2(t,x_1,x_2),
\end{equation}
where 
\begin{equation}\label{force} 
F(t,x)=\int K(y-x)f_2(t,x,y)dy/\int f_2(t,x,y)dy
\end{equation}
 (assume $f_2>0$).  Analogously, the second integral term is $
F(t,x_2)f_2(t,x_1,x_2)
$.

Recall that $K(y-x)$ is an interaction kernel and therefore it can be viewed as a force exerted by the particle located at $y$ on a particle located at $x$. Thus, the RHS of \eqref{force} is a total force exerted on particles located at $x$ by all other particles whose location at time $t$ is described by variable $y$ (the density of these particles is the normalized $f_2(t,x,y)$). 

Next, substituting \eqref{force} and \eqref{other_for_of_integr_term} into \eqref{trunc} we obtain a conservation law for $f_2(t,x_2,x_2)$
\begin{eqnarray}
\nonumber&&\partial_t f_2 +\partial_{x_1} \left(\left\{\frac{\alpha K(0)}{N}+\frac{\alpha}{N}K(x_2-x_1)+\alpha\,\frac{(N-2)}{N}\,F(t,x_1)\right\}f_2\right)\\
\label{trunc_as_cl}&&\hspace{50 pt}+\partial_{x_2} \left(\left\{\frac{\alpha K(0)}{N}+\frac{\alpha}{N}K(x_1-x_2)+\alpha\frac{(N-2)}{N}\, F(t,x_2)\right\}f_2\right)=0.
\end{eqnarray}
The first term in curly braces represents self-interaction, the second term represents the force exerted on the particle located at $x_1$ by a particle at $x_2$ and the third term represents the force exerted by the remaining $N-2$ particles on the particle located at $x_1$.


Finally, rewrite the Vlasov equation \eqref{vlasov} (MF equation) with no self-propulsion in the following form 
\begin{equation}\label{vlasov_g}
\partial_t f+\alpha\partial_{x}\left(\mathcal{G}(t,x)f\right)=0, \text{ where }\mathcal{G}(t,x)=\int K(x-y)f(t,y)dy.
\end{equation}
A comparison of \eqref{trunc_as_cl} and \eqref{vlasov_g} shows that equation \eqref{trunc_as_cl} can still be viewed as a mean field approximation; but at a higher order, 
 that is  for $f_2(t,x_1,x_2)$ instead of $f(t,x)=\lim\limits_{N\to\infty}f_1(t,x)$ and with the correct coefficients including order $1/N$ corrections.

\section{Numerical example}\label{numerics}  
The goal of this section  is (i) to test the truncation \eqref{trunc} on a simple 1D example 
and (ii) to describe one way of handling nonlocality and nonlinearity in the numerical resolution. 
Here we use numerical  methods which are explicit, allows for comparison with direct simulations, and {are} not necessarily the most efficient. 
The development of more advanced numerical methods capturing, e.g., 2D, non-smooth kernels or large times, are left for a subsequent work. 

To test the truncation we compare probability distributions (marginals) $f_1$ and $f_2$ obtained by numerical solution of the truncated system \eqref{trunc} with histograms of particles satisfying the original ODE system \eqref{original}. The histograms are built on many realizations of initial particle positions. 

\medskip

\noindent{\it Specific setting.} Numerical simulations are performed  for the one-dimensional problem, $x\in \mathbb R^1$, and periodic boundary conditions with period 1. The interaction kernel $K$ is periodic with period 1 and for $|X_j-X_i|<1/2$ it is given by  $K(X_j-X_i)=e^{-12(X_j-X_i)^2}$. Initially particles are independent: 
\begin{equation}\label{ic}
f_2(0,x_1,x_2)=f_1(0,x_1)f_1(0,x_2),\text{ where }f_1(0,x_1)=.4 \sin 2\pi x_1 + 1. 
\end{equation}
Number of particles is $N= 100$ per one periodic cell $x\in[0,1]$, $\alpha=3$.  

\medskip 

\noindent{\it Description of numerical methods.} In order to solve the PDE \eqref{trunc} we face difficulties that come from the fact that the equation is a non-local non-linear 2D conservation law. For a detailed discussion of difficulties in  numerical solution of non-linear conservation laws and the way to resolve them we refer to \cite{Lev92}. In this example we want to simulate accurately terms {of order $1/N$}, 
since they are the source of correlations. In other words, if one erases these terms in \eqref{trunc}, then the solution of equation \eqref{trunc} with initial conditions \eqref{ic} will be of the form $f_2(t,x_1,x_2)=f_1(t,x_1)f_1(t,x_2)$, i.e., with no correlations. 
 This motivates us to use a second order scheme and we implemented a second order scheme with flux limiters for which we have converging numerical solutions with reasonable spatial and time steps. This method is described below.   

The PDE \eqref{trunc} can be rewritten as follows   
\begin{equation*}
\partial_tf_2+\partial_{x_1}(\mathcal{A}_1 f_2)+\partial_{x_2}(\mathcal{A}_2 f_2)=0,
\end{equation*}
where for $f_2(x_1,x_2)>0$ functions $\mathcal{A}_1$ and $\mathcal{A}_2$ are given by
\begin{equation}
\mathcal{A}_k(t,x_1,x_2)=\frac{\alpha}{N}\sum\limits_{i=1,2}K(x_i-x_k) +\frac{\alpha(N-2)}{N} \frac{\int K(y-x_k)f_2(t,x_k,y)dy}{\int f_2(t,x_k,y)dy}.
\end{equation}

Denote by $f^m_{i,j}$ the approximation for $f_2(t,x_1,x_2)$ with $t=m\text{d}t$, $x_1=i \text{d}x$, $x_2=j \text{d}x$, where $\text{d}t$ and $\text{d}x$ are time and spatial steps, respectively. For given $m$, $i$ and $j$ introduce the following finite difference approximations for $\partial_k\left\{\mathcal{A}_{k}f_2\right\}$, $k=1,2$: 
\begin{eqnarray*}
&& r_{11}:=\frac{A^{m}_{i,j}f^{m}_{i,j}-A^{m}_{i-1,j}f^{m}_{i-1,j}}{\text{d}x},\;\;r_{12}:=\frac{A^{m}_{i+1,j}f^{m}_{i+1,j}-A^{m}_{i,j}f^{m}_{i,j}}{\text{d}x},\\
&& r_{21}:=\frac{A^{m}_{i,j}f^{m}_{i,j}-A^{m}_{i,j-1}f^{m}_{i,j-1}}{\text{d}x},\;\;r_{22}:=\frac{A^{m}_{i,j+1}f^{m}_{i,j+1}-A^{m}_{i,j}f^{m}_{i,j}}{\text{d}x}.
\end{eqnarray*}
Introduce also an auxiliary function (flux limiter) $\phi(r)=\max\left[ 0, 0.5 \min (r, 1.5)\right]$.

The following finite difference scheme is used in the numerical solution of PDE \eqref{trunc}:
\begin{equation}\label{marg_calc}
f^{m+1}_{i,j}=f^{m}_{i,j} + \frac{\text{d}t}{\text{d}x}\left[\sum\limits_{k=1,2}\left\{r_{k1}+\phi \left(\frac{r_{k1}}{r_{k2}}\right)(r_{k2}-r_{k1})\right\}\right].
\end{equation}

In order to compute the two-particle distribution for $t>0$ directly from the system of ODEs \eqref{original} we consider $R=5\cdot 10^{5}$ realizations of $N=100$ particles initially identically distributed with probability distribution function $f(x)=0.4\sin 2 \pi x+1$. Denote by $X_{i}^{(r)}(t)$ the position of the $i$th particle, $i=1,..,N$ in the $r$th realization, $r=1,..,R$ at time $t$. For each $r=1,..,R$ the positions $\left\{X_i^{(r)}(t)\right\}_{i=1,..,N}$, $t>0$, are found as the solution of the ODE system \eqref{original} by the explicit Euler method of the first order with the time step $\Delta t=0.01$.     

We compute the following histogram which approximates the probability of that the first particle is in the interval $\Delta_j=[jh,(j+1)h)$ at time $t$:
\begin{equation}\label{hysto_f1}
\tilde{f}_1(t,\Delta_j)=\frac{1}{Rh}\#\left\{X_1^{(r)}(t)\in \Delta_j, \;r=1,..,R\right\}.
\end{equation}
Here $h=0.05$ is the size of a histogram bin. 

Histogram $\tilde{f}_2$ which approximates the two-particle distribution can be computed as follows
\begin{equation}\label{hysto_f2}
\tilde{f}_2(t,\Delta_i,\Delta_j)=\frac{1}{Rh^2}\#\left\{\left(X_1^{(r)}(t),X_2^{(r)}(t)\right)\in \Delta_i\times\Delta_j, r=1,..,R\right\}.
\end{equation}

\medskip

Thus, in numerical simulations  our intention is to compare $f_1=\int f_2 dx$ and $f_2$ calculated by \eqref{marg_calc} with histograms $\tilde{f}_1$ and $\tilde{f}_2$ calculated by \eqref{hysto_f1} and \eqref{hysto_f2}.

\medskip 

Simulations were performed on a machine with 3.06 Ghz Intel core CPU 8 GB of RAM.
Numerical solution of \eqref{trunc} for $t=1$ for $dx=.0025$ and $dx/dt=200$ takes approximately 32 hours. Numerical solution of \eqref{original} on $R=10^5$ realization, $t=1$ and time step $\Delta t=1/50$ takes approximately 83 hours. Besides the long time of computations direct simulations face another difficulty which is the large amount of data 
that creates technical difficulties in data movement, its analysis and visualization. Also note that the cost of direct simulations would increase much faster with $N$ than the cost of our approach.


\noindent{\it Results of numerical simulations.} Plots in Fig. \ref{fig:comppf1} show that marginal $f_1$ {is} close to histogram $\tilde{f}_1$. 

In order to visualize comparisons between $f_2$ and $\tilde{f}_2$ we plot these functions integrated over domain $\mathcal{B}=\left\{(x_1,x_2): 0\leq x_1,x_2 \leq 1/2\right\}$:
\begin{equation*}
\text{Qmarg}:=\int_{\mathcal{B}}f_2 dx_1dx_2,\;\;\;\text{Qhist}:=\int_{\mathcal{B}}\tilde{f}_2 dx_1 dx_2=h^2\sum_{i,j\leq 1/(2h)}\tilde{f}_2(t,\Delta_i,\Delta_j).
\end{equation*}
Plot \ref{Btest} shows that quantities Qmarg defined by marginal $f_2$ and Qhist defined by histogram $\tilde{f}_2$ seem to be in very good agreement. 

Notice that the quarter cube $\mathcal{B}$ was chosen arbitrarily (and in particular there is no conservation of mass on $\mathcal{B}$, unlike conservation of mass for the entire cell $[0,1]^2$).  The agreement between Qmarg and Qhist suggests that the integrals of $f_2$ and $\tilde f_2$ over any subdomain of the cell $[0,1]^2$ would similarly be close. 
The apparent periodicity in time is due to the choice of periodic boundary conditions.

Finally, we computed correlations using marginals $f_1$ and $f_2$. For the marginal approach (i.e., solution of \eqref{trunc} and $f_1=\int f_2 dx_2$) correlations are defined as follows 
\begin{eqnarray*}
c(t):=\int\int |f_2(t,x_1,x_2)-f_1(t,x_1)f_1(t,x_2)|dx_1 dx_2.
\end{eqnarray*}

In direct simulations (i.e., solution of ODE \eqref{original} for many random realizations of initial conditions) correlations are defined in a similar way to the above formula with histograms in place of distributions: 
\begin{eqnarray*}
\tilde{c}(t):&=&\int\int |\tilde{f}_2(t,x_1,x_2)-\tilde{f}_1(t,x_1)\tilde{f}_1(t,x_2)|dx_1 dx_2\\
&=&h^2\sum\limits_{i,j}|\tilde{f}_2(t,\Delta_i,\Delta_j)-\tilde{f}_1(t,\Delta_i)\tilde{f}_1(t,\Delta_j)|.
\end{eqnarray*}

As it is seen on Fig. \ref{correls}, plots for correlations computed on marginals and in direct simulations for $R=5\times 10^5$ have similar qualitative behavior and order of magnitude.  The value of correlations  is a small number and thus its computation requires high accuracy to reduce the error to an order less than that of the correlations. In direct simulations, this requires a large number of realizations which make the computations unreasonably long, in contrast to the marginal approach where the computation of correlations is much faster.

Note that correlations observed in this numerical example are not large (in comparison with the maximal possible value of correlations $c_{\text{max}}=2$). In order to observe large correlations (e.g., $\sim 0.1$) we need to solve \eqref{trunc} for large times which is very costly. Moreover, {it is delicate to predict the time when correlations will reach some fixed, large value}. This question is left for subsequent works. Nevertheless, {relatively} small correlations for times of order $1$ may be enough for the solution of the original BBGKY hierarchy to be essentially different from the one obtained by the Mean Field approach. In that case our approach with \eqref{trunc} would still capture the correct solution in contrast to Mean Field. 

\medskip

\noindent{\it Convergence of numerical methods.} Here we show that the convergence of numerical methods used in this section. 

First, consider the calculations of marginals $f_1$ and $f_2$. Comparisons of numerical simulations for various spatial and time steps for $t=1$, $t=2$ and $t=3$ are presented on Figures \ref{f1sec12} and \ref{f1sec3}. 
Convergence of the numerical method in computing $\int_{\mathcal{B}} f_2 dx$ and correlations $c(t)$ is observed on plots in Figure \ref{f1correl}.

Next, look at the calculations for histograms $\tilde{f}_1$ and $\tilde{f}_2$. Plots on Figures \ref{hf1sec12} and \ref{hf1sec3} illustrate convergence of the method for histogram $\tilde{f}_1$ at times $t=1$, $t=2$ and $t=3$, and for histogram $\tilde{f}_2$ summed over the set $\mathcal{B}$. {Several time steps $\Delta t$ are considered}: $\Delta t=0.02$, $\Delta t=0.01$, $\Delta t=0.001$. The number of realizations, $R=10^5$,  is chosen for the width of bin $h=0.02$. It is seen on  Figures \ref{hf1sec12} and \ref{hf1sec3} that such a number of realizations $R$ seems to be enough to have converging numerical solutions for $\tilde{f}_1$ and $\int_{\mathcal{B}}\tilde{f}_2$. To compute correlations $\tilde{c}(t)$ more realizations {would be} needed and plots in Figure \ref{hcorr_conv} show that to estimate $\tilde{c}(t)$ we need {more than} $R=5\cdot 10^5$ realizations with $\Delta t=0.002$.   

\medskip

Numerical simulations presented above show that PDE system \eqref{trunc}  not only preserves the qualitative properties of the probability distribution functions (like positivity, consistency, propagation of chaos, etc.), but also may serve for the study of saturation of correlations in such many particle systems. Correlations play an important role, e.g., in the description of collective motion (see, e.g., \cite{SokAra12}, where transition from individual to collective state is described via correlations).

\begin{figure}[t]
\includegraphics[width=7 cm]{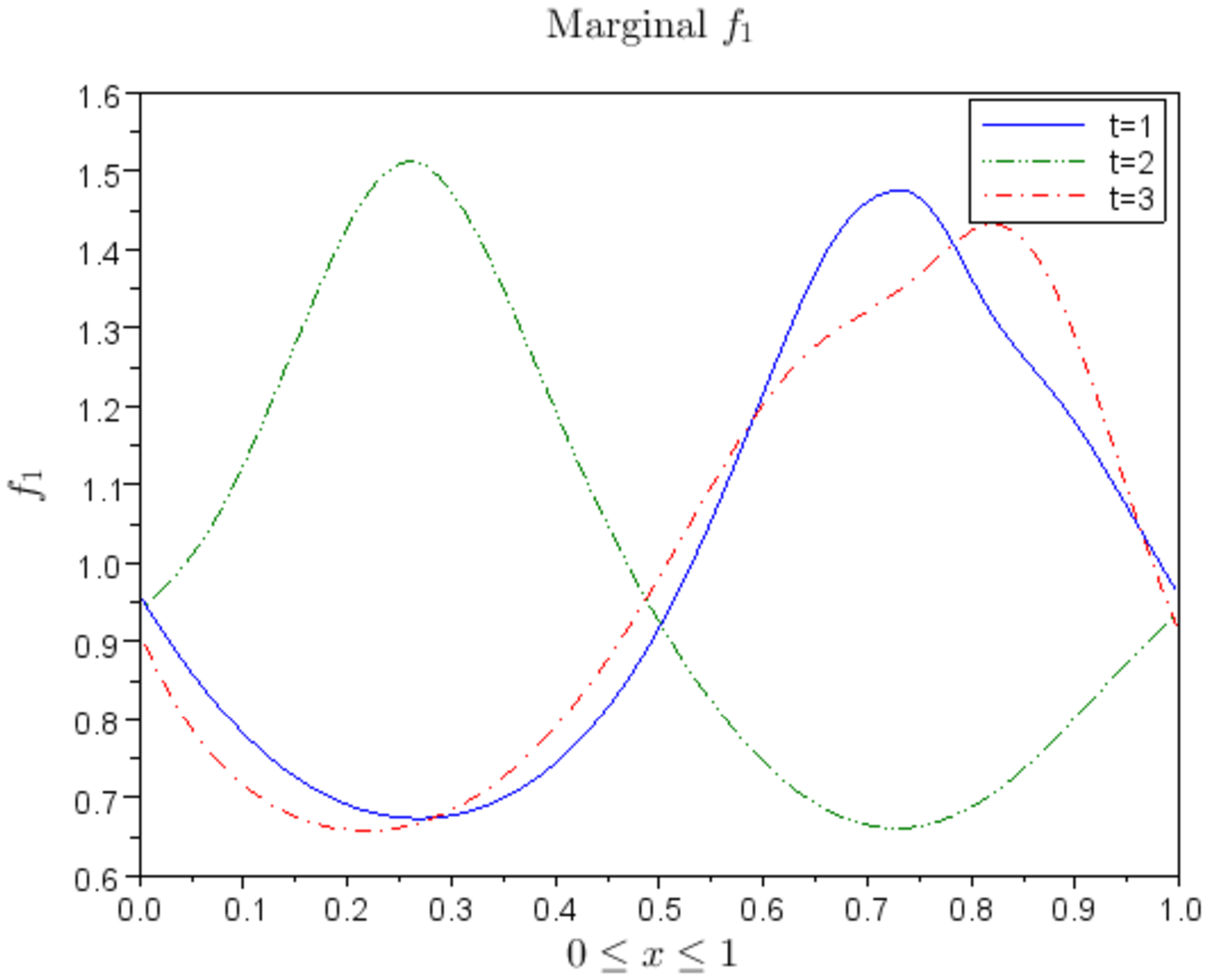}
\includegraphics[width=7 cm]{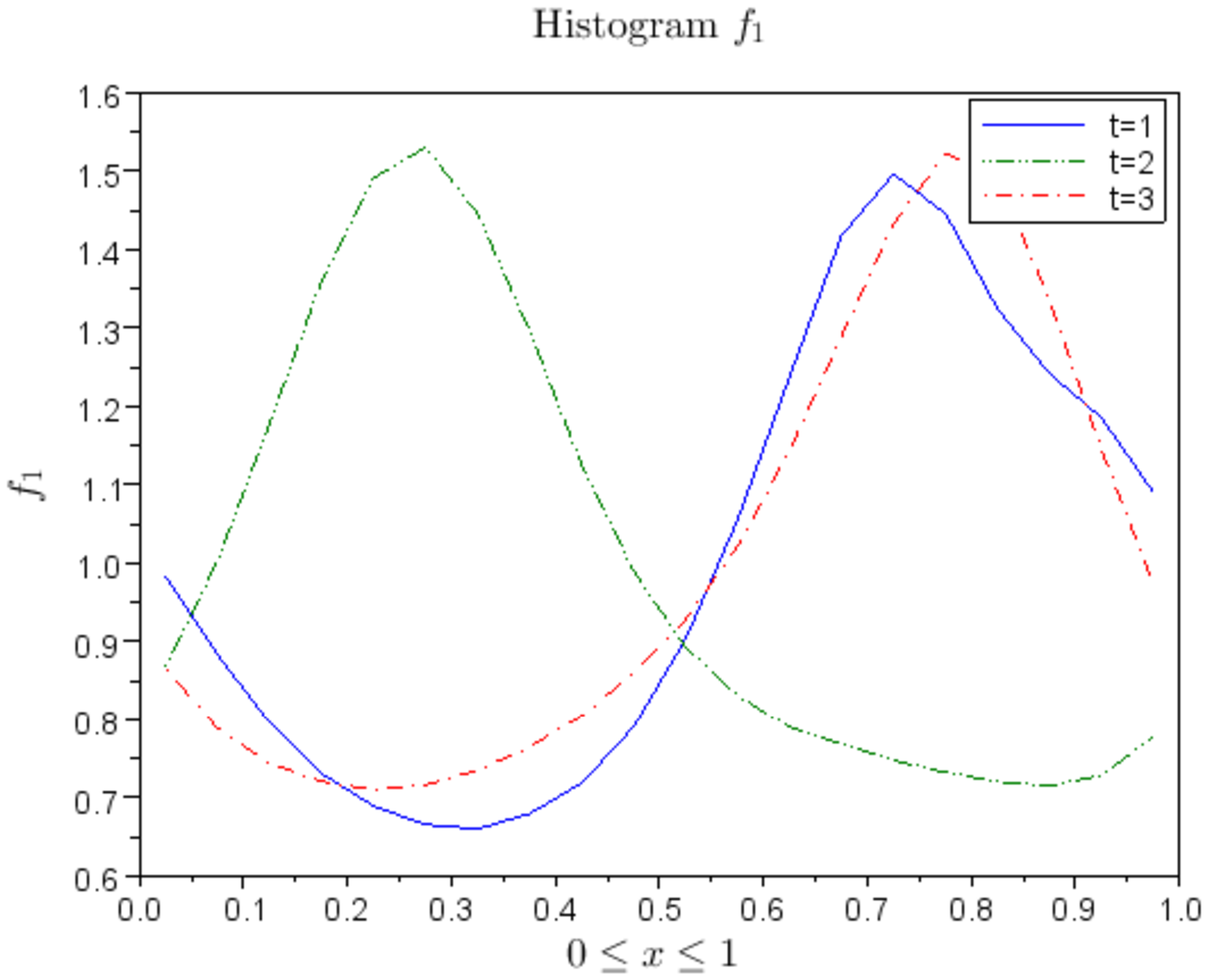}
\caption{Left: Marginal $f_1$ with $dx=0.0025$ and $dt=dx/200$; Right: Histogram $\tilde{f}_1$ for $h=0.05$ and $dt=0.001$.}
\label{fig:comppf1}
\end{figure}

\begin{figure}[t]
\centerline{\includegraphics[width=8 cm]{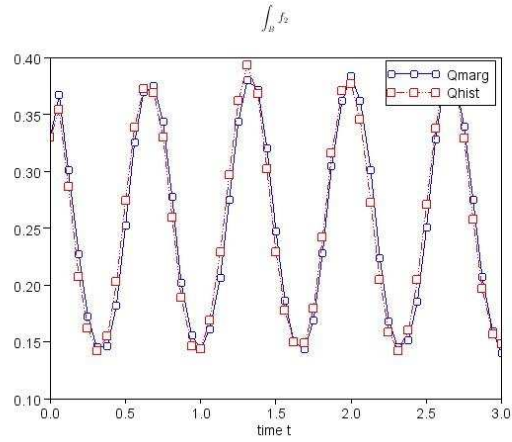}}
\caption{Comparison between {\rm Qmarg} and {\rm Qhist}. }
\label{Btest}
\end{figure}

\begin{figure}[t]
\centerline{\includegraphics[width=8 cm]{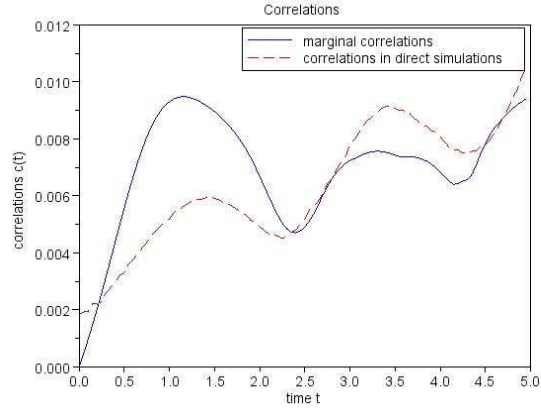}}
\caption{Comparisons of correlations computed by marginal approach \eqref{trunc} and direct simulations of the system \eqref{original}.}
\label{correls}
\end{figure}

\begin{figure}[t]
\centerline{\includegraphics[width=7 cm]{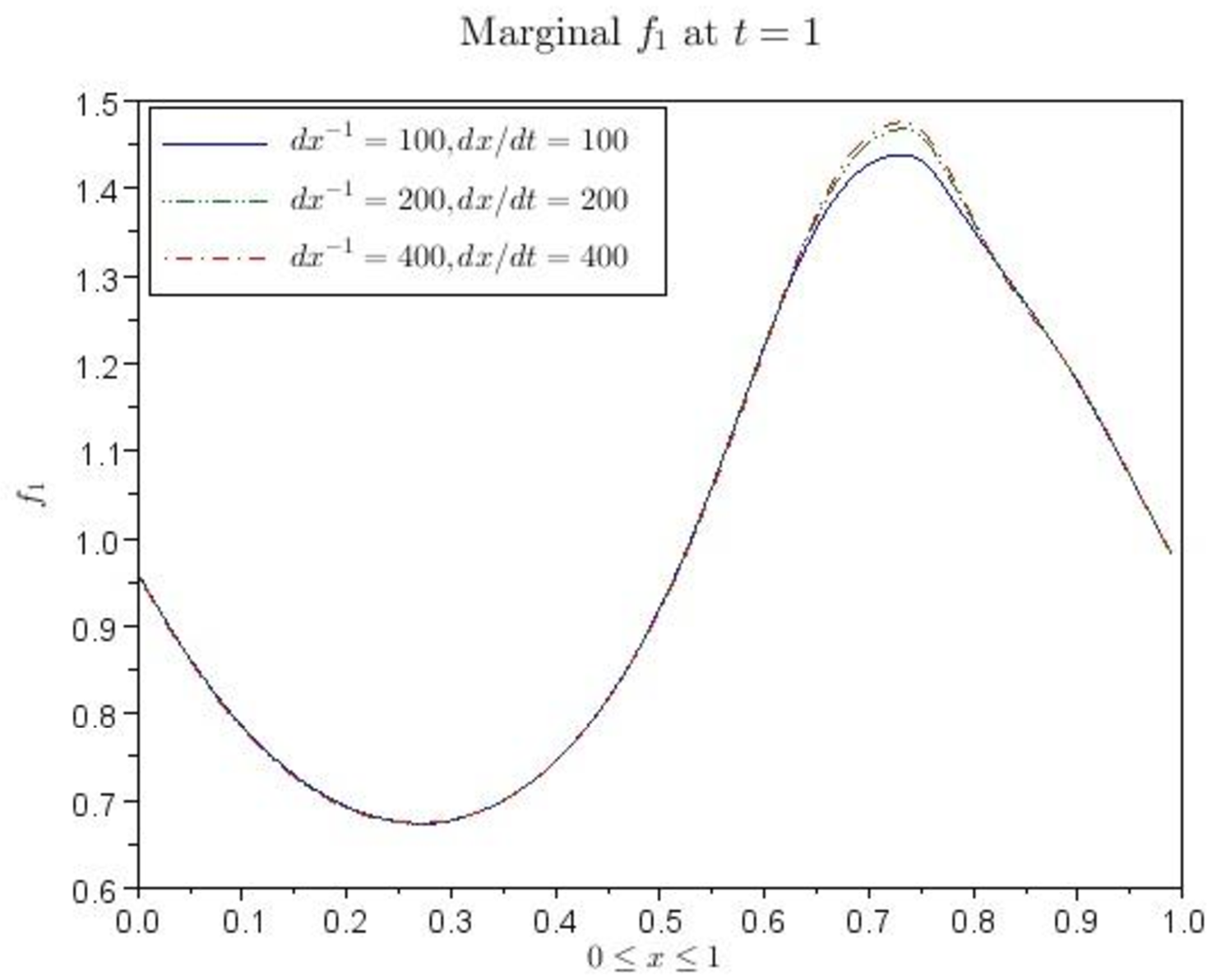} \includegraphics[width=7cm]{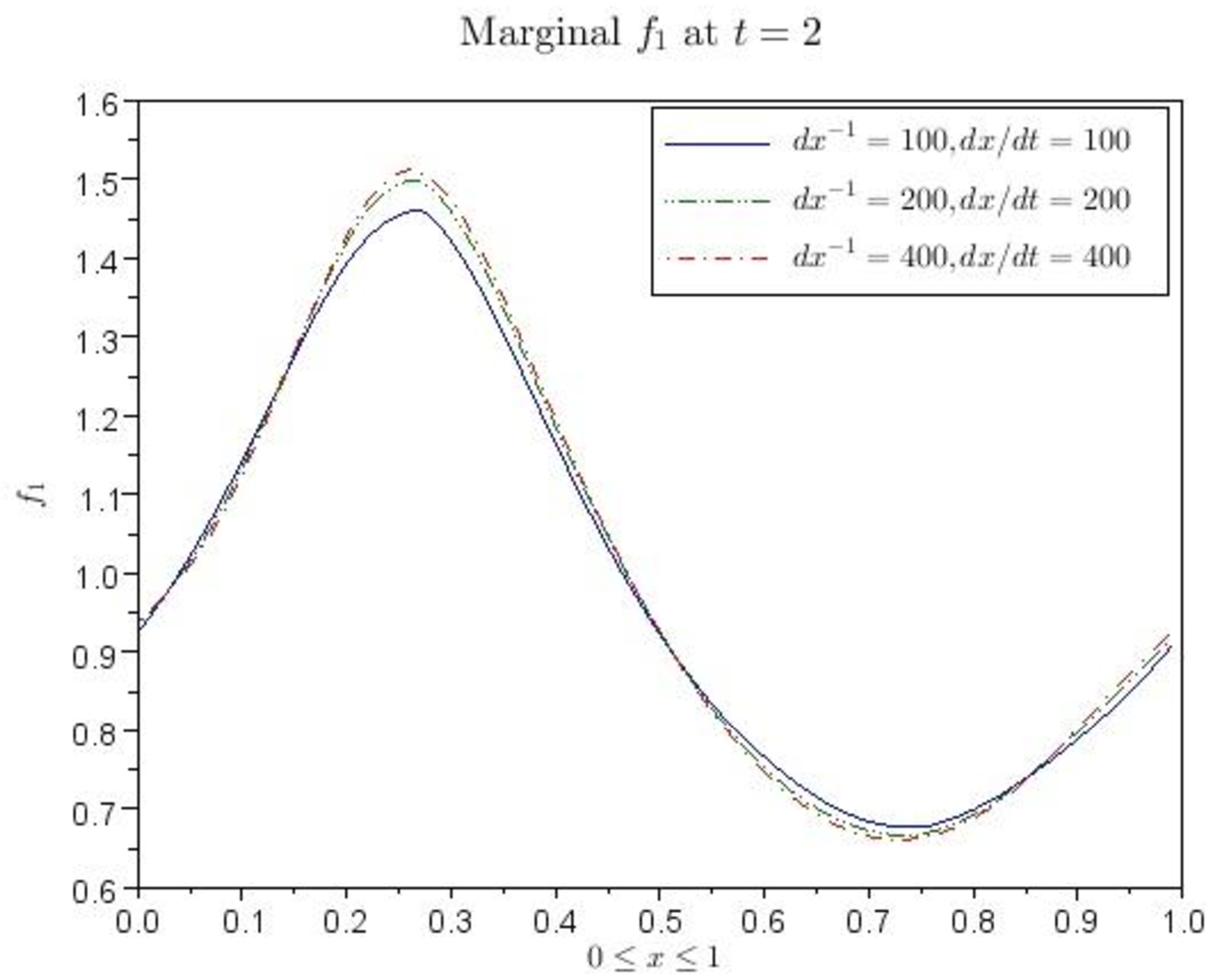}}
\caption{Left: plots of marginal $f_1$ at $t=1$ and Right: $t=2$}
\label{f1sec12}
\end{figure}

\begin{figure}[t]
\centerline{\includegraphics[width=7 cm]{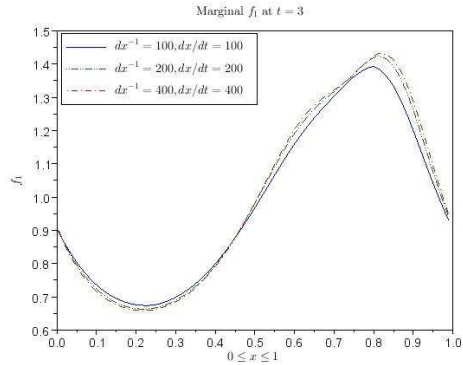}}
\caption{Plots of marginal $f_1$ at $t=3$ for $0\leq x \leq 1$;}
\label{f1sec3}
\end{figure}

\begin{figure}[t]
\centerline{\includegraphics[width= 7 cm]{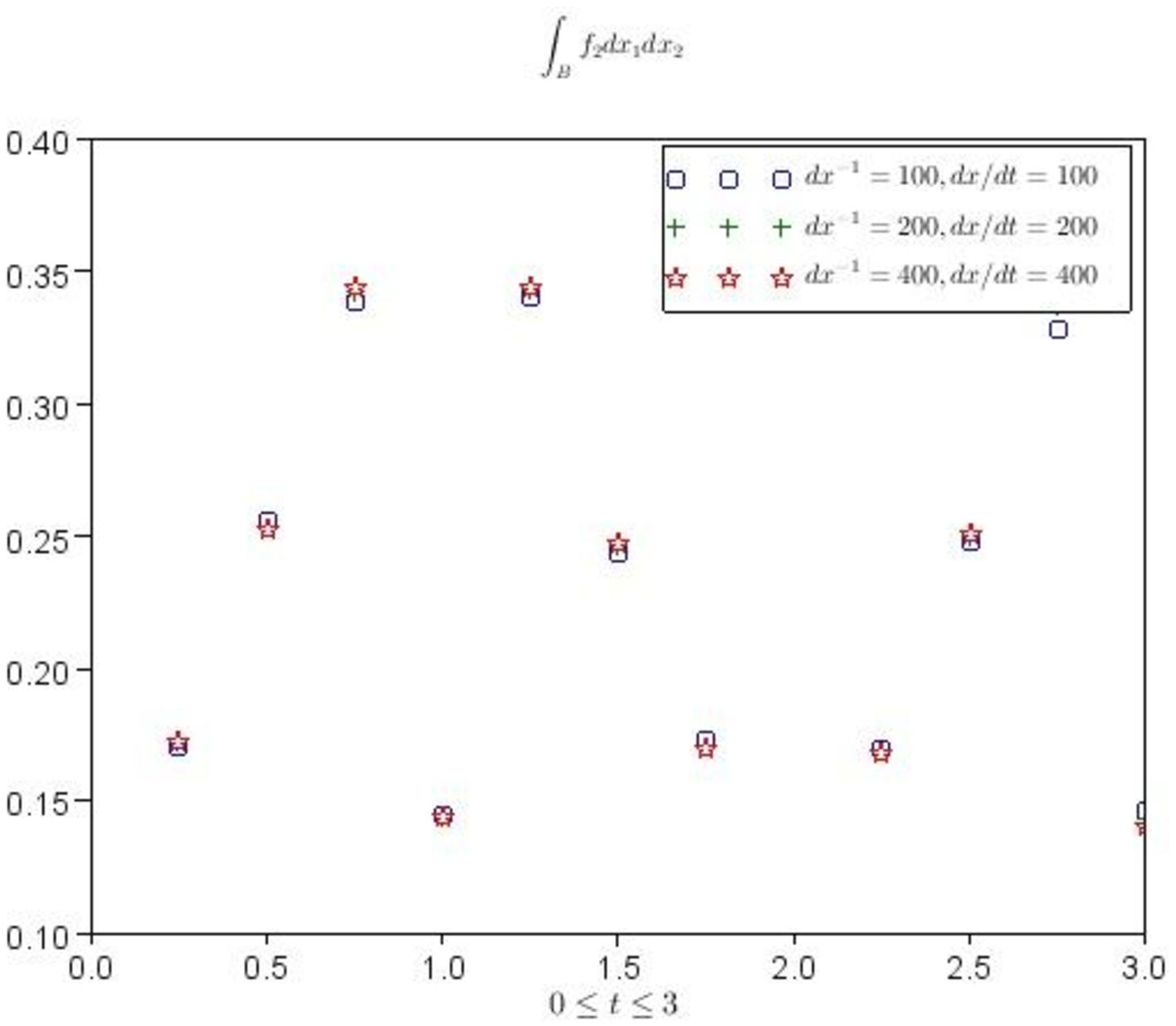} \includegraphics[width=7 cm]{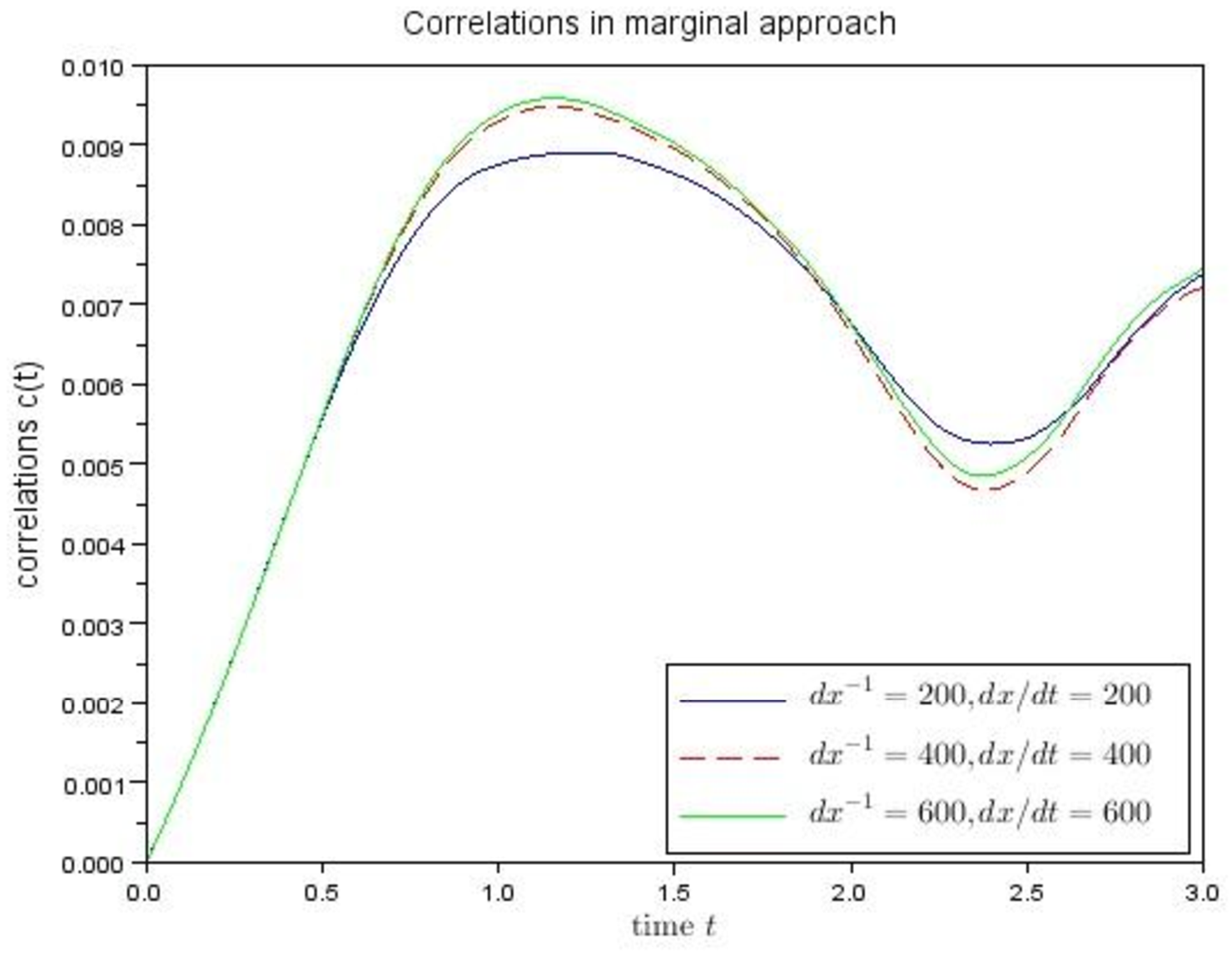}}
\caption{Left: marginal $f_2$ integrated over $B$; Right: correlations $c(t)$}
\label{f1correl}
\end{figure}

\begin{figure}[t]
\centerline{\includegraphics[width= 7 cm]{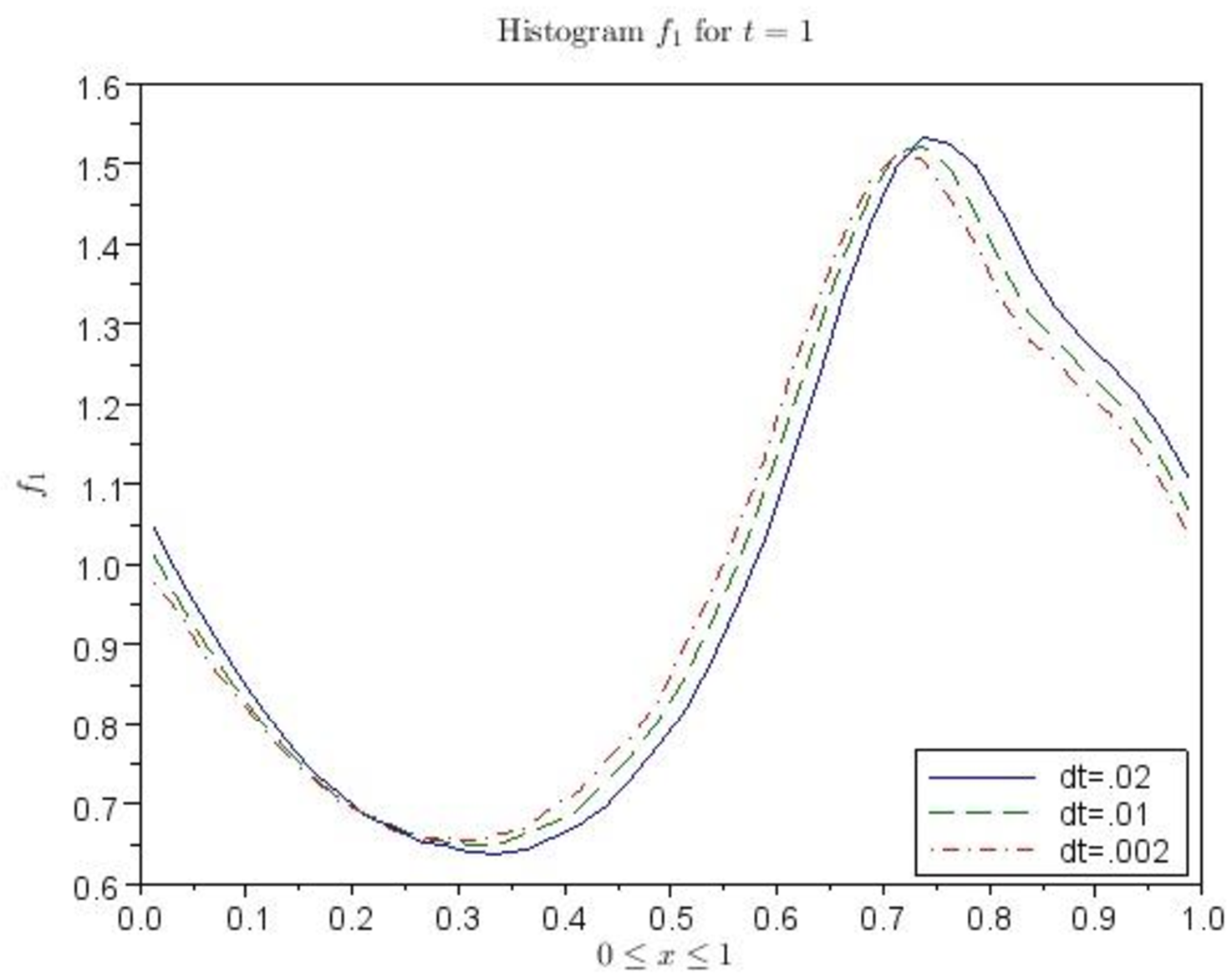} \includegraphics[width=7 cm]{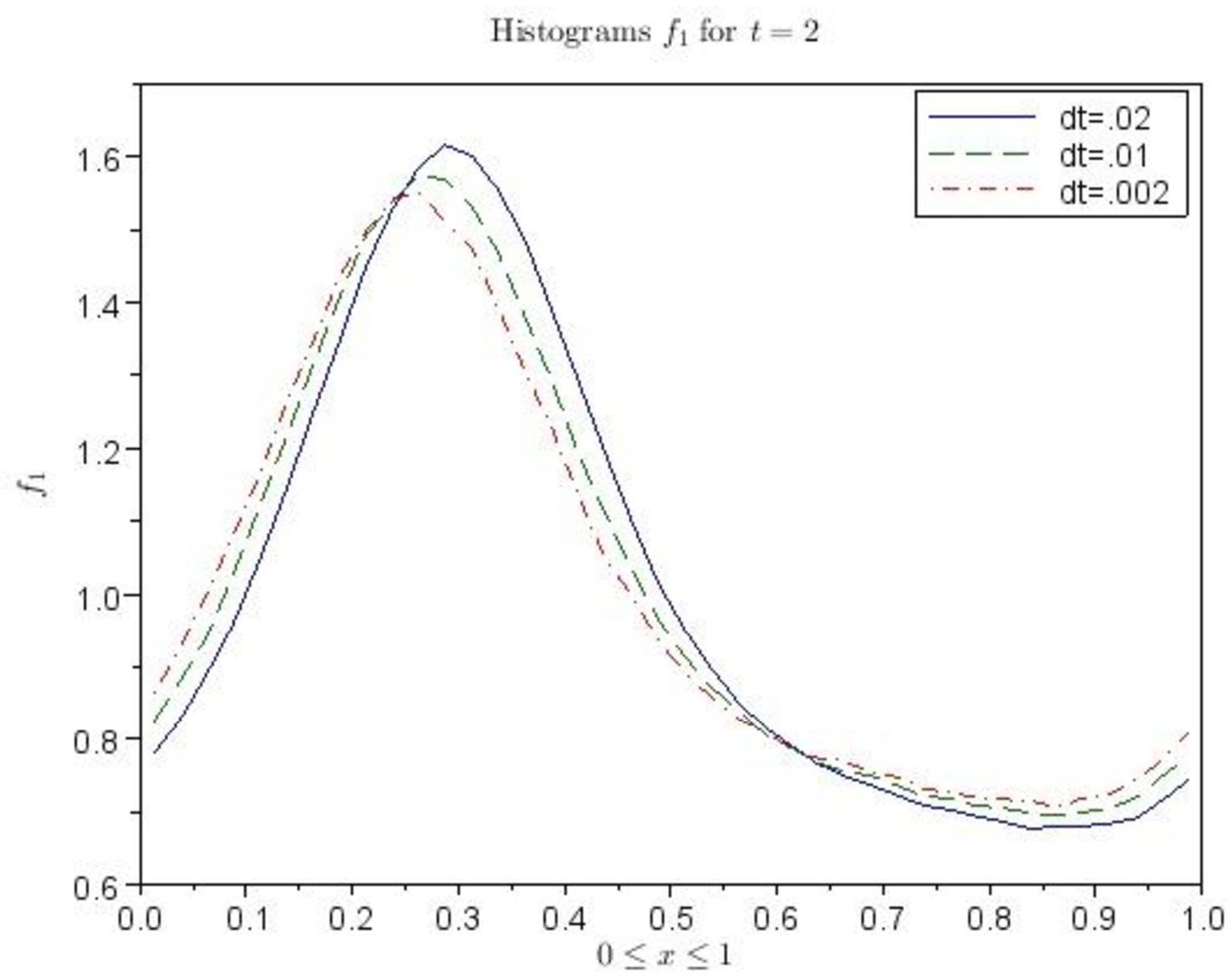}}
\caption{Left: plots of histogram $\tilde{f}_1$ at $t=1$; Right:  plots of histogram $\tilde{f}_1$ at $t=2$}
\label{hf1sec12}
\end{figure}

\begin{figure}[t]
\centerline{\includegraphics[width= 7 cm]{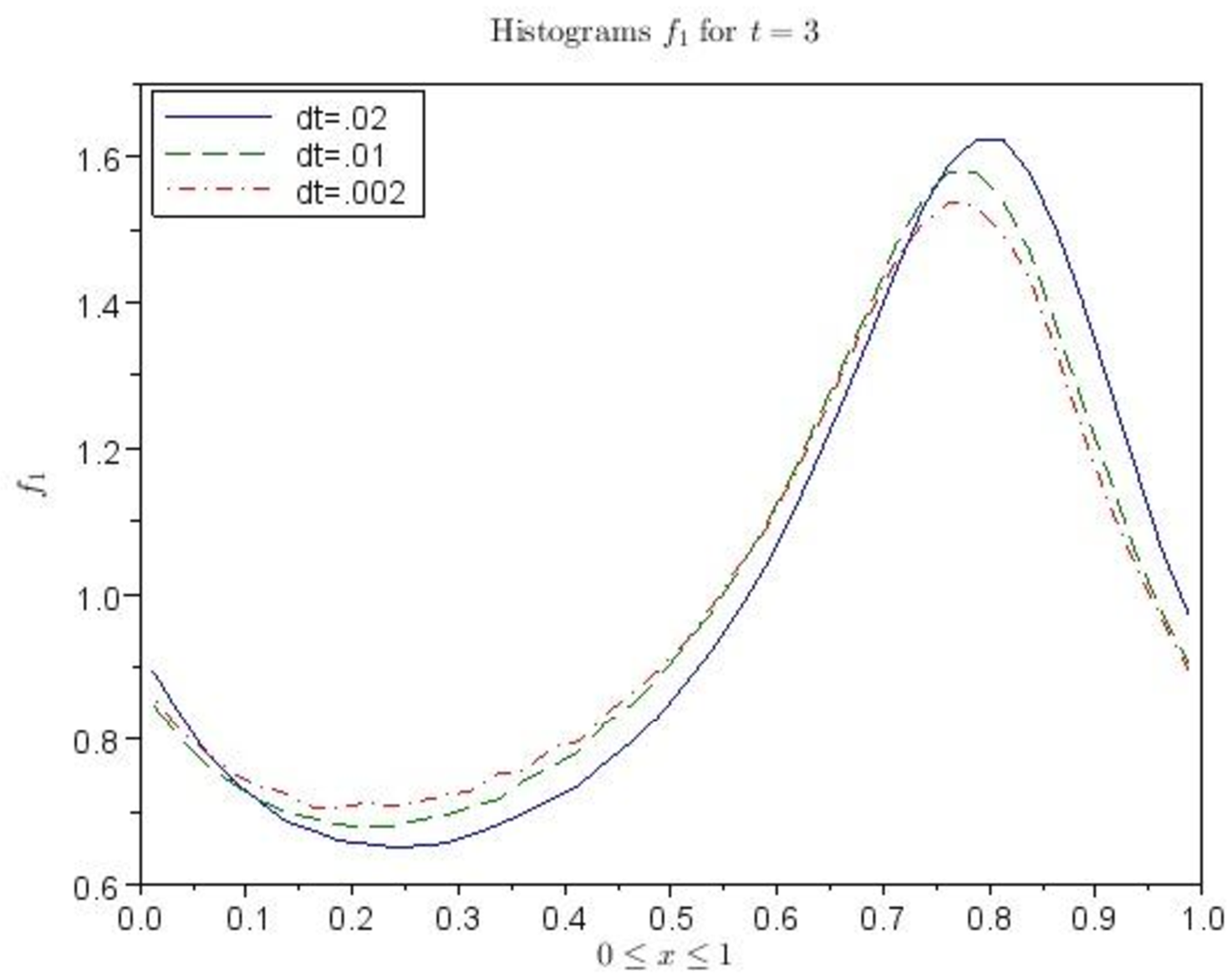} \includegraphics[width=7 cm]{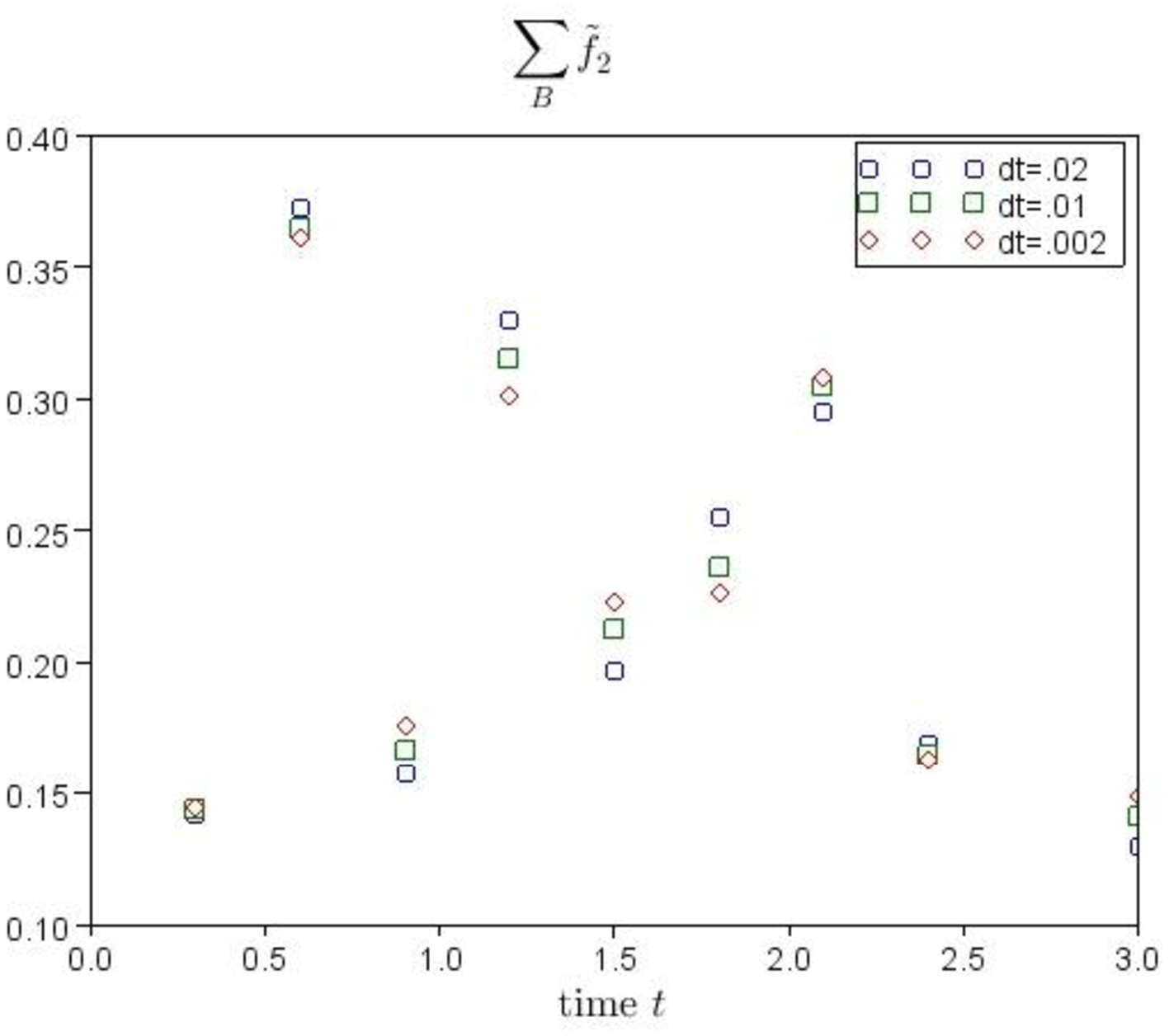}}
\caption{Left: plots of histogram $\tilde{f}_1$ at $t=3$; Right:  plots of histogram $\tilde{f}_2$ summed over set $\mathcal{B}$}
\label{hf1sec3}
\end{figure}

\begin{figure}[t]
\centerline{\includegraphics[width= 7 cm]{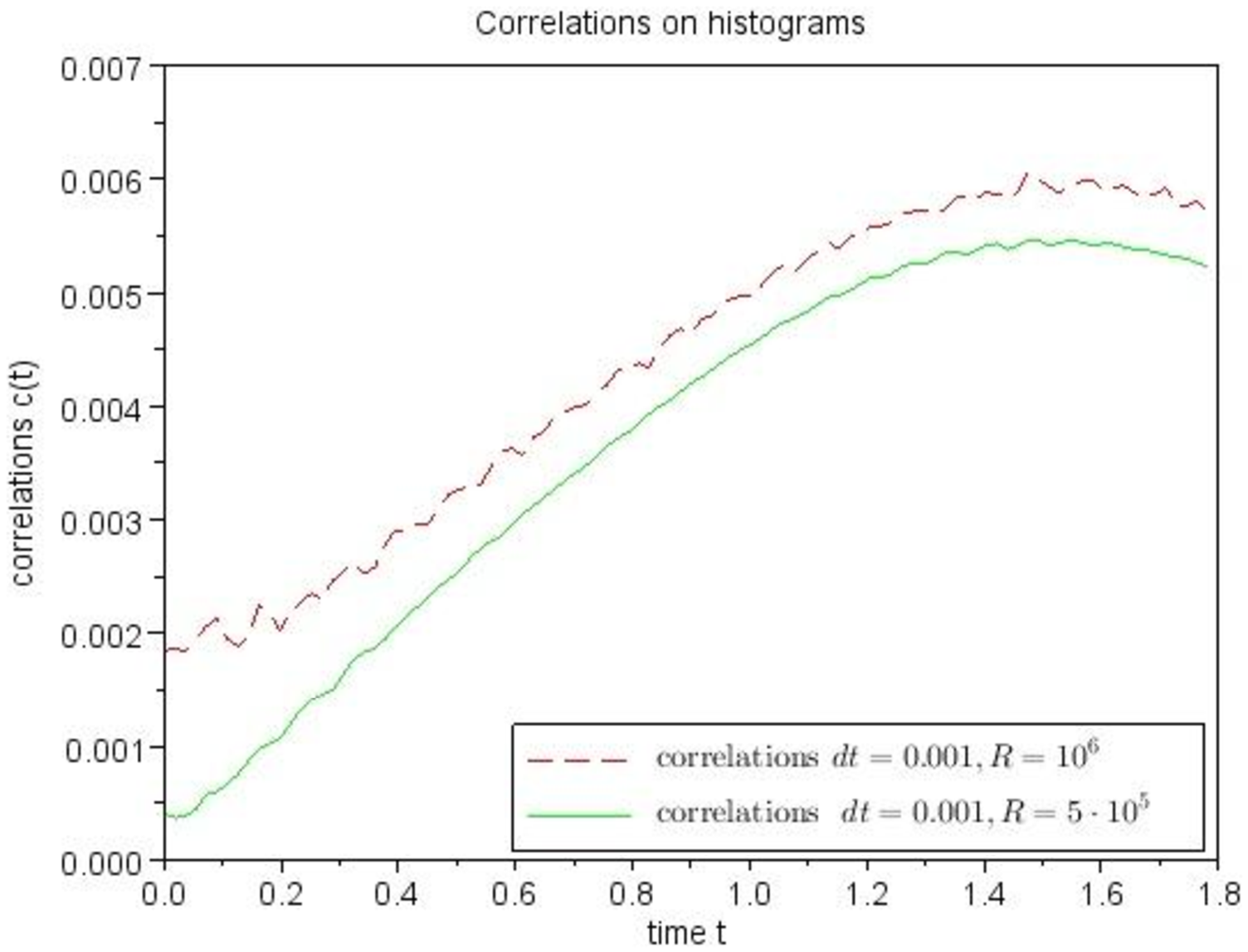}}
\caption{Correlations computed on histograms}
\label{hcorr_conv}
\end{figure}

\medskip 

\section{Conclusions} We developed a numerical approach which allows for study correlations in the evolution of many particle systems  with random initial conditions. This approach is implemented in a simple 1D settings (toy model). The complexity of solving PDE \eqref{trunc} only slowly grows as $N$ goes to infinity. In other words, the dependence of the complexity on $N$ in our approach is more 'innocuous', in sharp contrast with direct simulations when the complexity drastically increases as $N$ grows. 

We believe that this approach can be successfully applied to problems in biology, physics and economics.    
%

\bibliographystyle{ieeetr}
 \bibliography{correlations}
 
\section*{Acknowledgments}
 The work of Leonid Berlyand and Mykhailo Potomkin was supported by DOE grant DE-FG- 0208ER25862. The work of Pierre-Emmanuel Jabin was partially supported by NSF grant DMS-1312142. LB and MP wish to thank V. Rybalko for his comments and suggestions which helped to improve the manuscript.

%
%

\appendix
\section{Appendix: Wasserstein distances}
Wasserstein distance or Monge-Kantarovich-Wasserstein (MKW) quantifies the difference between two given measures. Heuristically, a measure can be viewed as a pile of sand. The MKW distance between two such piles is an optimal work of transfering one pile into another.   

Given two measures $\mu_1$ and $\mu_2$ in $\Pi^1(D)$, one may define the set of transference plans between $\mu_1$ and $\mu_2$ as the set  $\mathcal{T}(\mu_1,\mu_2)$ of measures  $\pi\in \Pi^1(D\times D)$ s.t. 
\[
\mu_1(x)=\int_D \pi(x,dy),\quad \mu_2(y)=\int_D \pi(dx,y).
\]
The $p$ MKW distance $W_p(\mu_1,\mu_2)$ between $\mu_1$ and $\mu_2$ is given by
\[
W_p^p(\mu_1,\mu_2)=\inf_{\pi \in \mathcal{T}(\mu_1,\mu_2)} \int_{D^2} |x-y|^p\,\pi(dx,dy).
\]
If $D$ is the torus, then $|x-y|$ is replaced by the corresponding distance (in general in a manifold, it would be the geodesic distance).

For measures with bounded moments, the MKW distances metrize the weak-* topology. Moreover, on bounded domains the $W_1$ distance is essentially  equivalent to the negative Sobolev norm $W^{-1,1}$.The $p$-MKW distances play an important role for particle systems as the $p$-MKW distance between two empirical measures is typically comparable with the $p$ distance between the two vectors of positions, that is
\[
W_p^p\left(\frac{1}{N}\sum_i \delta_{x_i},\;\frac{1}{N}\sum_i \delta_{y_i}\right)\sim \frac{1}{N} \sum_i |x_i-y_i|^p, 
\]
up to a permutation of indices on the $y_i$.

\end{document}